\documentclass[a4paper,UKenglish,cleveref, autoref, thm-restate, numberwithinsect]{lipics-v2021}

\makeatletter
\@fleqnfalse
\makeatother

\AtBeginDocument{\nolinenumbers}  % Get rid of the line numbers
\AtBeginDocument{\colorlet{lipicsYellow}{white}}  %% Hide colour swatches
\hideLIPIcs % Get rid of the XX at the top of each page

\usepackage{amsmath,amssymb,mathtools}
\usepackage{enumitem}

\usepackage[svgnames]{xcolor}
\usepackage{hyperref}

\usepackage{amsthm}

\newtheorem{fact}[theorem]{Fact}
\newtheorem{problem}[theorem]{Problem}

\numberwithin{equation}{section}

\newcommand{\sfR}{{\mathsf{R}}}
\newcommand{\K}{{\mathsf{K}}}
\newcommand{\E}{{\mathsf{E}}}

\newcommand{\Tau}{{\mathcal T}}

\renewcommand{\L}{\mathsf L}
\newcommand{\G}{{G}}
\newcommand{\FF}{{\mathbb F}}
\newcommand{\Fq}{\mathbb F_q}
\newcommand{\Fqn}{\mathbb F_{q^n}}
\newcommand{\softO}{{O\,\tilde{}\,}}

\newcommand{\ZZ}{{\mathbb Z}}
\newcommand{\W}{{\mathsf{W}}}
\newcommand{\A}{\mathcal A}

\newcommand{\Moore}{{\mathcal M}}
\newcommand{\nxn}{{n\times n}}

\DeclareMathOperator{\Gal}{Gal}
\DeclareMathOperator{\Tr}{Tr}
\DeclareMathOperator{\Res}{Res}
\DeclareMathOperator{\Circ}{Circ}

\hypersetup{
  colorlinks=true,
  linkcolor=blue!50!black,
  citecolor=blue!50!black,
  urlcolor=blue!60!black
}

\usepackage{algorithm, algorithmicx}
\newcounter{algorithmeligne}

\title{Fast Deterministic Normal Bases and  Circulant Polynomial Determinants}

\author{Mark Giesbrecht}{University of Waterloo, Canada}{mwg@uwaterloo.ca}{https://orcid.org/0009-0006-9312-8768}{Natural Sciences and Engineering Council, Canada}

\author{Armin Jampshidpey}{University of Waterloo, Canada}{armin.jamshidpey@uwaterloo.ca}{https://orcid.org/0000-0000-0000-????}{}

\author{\'Eric Schost}{University of Waterloo, Canada}{eschost@uwaterloo.ca}{https://orcid.org/0000-0000-0000-????}{Natural Sciences and Engineering Council, Canada}

\authorrunning{Mark Giesbrecht, Armin Jamshidpey, and \'Eric Schost}

\ccsdesc[500]{Mathematics of computing~Computations in finite fields}
\ccsdesc[500]{Computing methodologies~Algebraic algorithms}
\ccsdesc[300]{Theory of computation~Design and analysis of algorithms}

\keywords{Normal Basis, Finite Fields, Deterministic Algorithm, Circulant Determinant, Modular Composition}

\begin{document}

\maketitle

\begingroup
\renewcommand\thefootnote{}
\footnotetext{June 29, 2026. Submitted for publication.}
\endgroup

\begin{abstract}
  Let $\E=\Fq[x]/(\Gamma)$ be an algebraic extension of degree $n$
  over the finite field $\Fq$, given by a $\Gamma\in\Fq[x]$ monic and
  irreducible.  It is classical that any such $\E$ contains an element
  $\beta\in\E$ that is \emph{normal} over $\Fq$, i.e., the conjugates
  $\beta,\beta^q,\ldots,\beta^{q^{n-1}}$ form an $\Fq$-basis of $\E$
  over $\Fq$.  In this paper we give a deterministic algorithm which
  finds such a normal element using 
  $O_\epsilon((n^2\log q)^{1+\epsilon})+\softO(n\log^2 q)$ bit
  operations, for any $\epsilon>0$.

  The algorithm works by showing that, for a parameter $t\in\Fq$, the
  element $\beta_t=(\theta-t)^{-1}$ is normal except for at most
  $n(n-1)$ values of $t$.  This is established by constructing a
  ``cleared Moore'' circulant matrix over $\Fqn[\Tau]$, whose
  determinant degree at most $n(n-1)$, such that $\beta_t$ is normal
  if and only the determinant is non-zero at $t\in\Fq$. For faster
  computation over the base field, we replace this by an equivalent
  trace Gram circulant matrix over $\Fq[\Tau]$.

  A main algorithmic contribution is a fast determinant algorithm for
  circulant matrices of polynomials, which uses triangular set projection
  and modular composition techniques to achieve a near-linear cost.
  Given an $n\times n$ circulant matrix over $\Fq[t]$ whose entries
  have degree at most $m>0$, we show how to compute its determinant
  deterministically with $O_\epsilon((nm\log q)^{1+\epsilon})$ bit
  operations.

  We complete the solution by showing how to extend this to finite
  fields of size less than $n(n-1)$, through an embedding in a
  low-degree extension field, at poly-logarithmic additional cost.
\end{abstract}

\clearpage

\section{Introduction}
\label{sec:intro}

Let $\Fq$ be the finite field with $q$ elements.  A finite extension
field $\E=\Fq[x]/(\Gamma)\cong\Fqn$ is constructed via a monic
irreducible polynomial $\Gamma\in\Fq[x]$ of degree $n$. A \emph{normal
  basis} of $\E/\Fq$ is a $\Fq$-basis of the form
\[
  \{\beta,\beta^q,\ldots,\beta^{q^{n-1}}\} \subseteq \E
\]
for some \emph{normal element} $\beta\in\E$.  It is an important
classical result that a normal element $\beta$ always exists.  Our
goal here is to construct such an element deterministically in nearly
quadratic time with respect to $n$.

Normal bases have a number of computational advantages over other
bases, such as the power basis for finite field arithmetic. The most
fundamental is that the $q$th-power Frobenius automorphism acts as a
cyclic shift on the coordinates of any element represented in a normal
basis, making the exponentiation by power of $q$ essentially
free. This makes normal bases attractive in cryptographic applications
over binary fields, where they underpin efficient hardware multipliers
for elliptic curve cryptography~\cite{LidNie97}. The same Frobenius
shift property makes them useful in algorithms for factoring
polynomials over finite fields, where manipulating the Frobenius map
efficiently is central~\cite{GatSho92, KalSho98}. Normal bases are
also a key ingredient in algorithms for computing isomorphisms and
embeddings between finite fields. The deterministic polynomial-time
algorithm~\cite{Len91} constructs a normal basis as an essential
subroutine, and subsequent work~\cite{BriDef19} has continued to rely
on this approach. Normal bases further arise in the study of
linear-feedback shift registers, in the construction of optimal codes,
and in algorithms for computing Frobenius forms of matrices over
finite fields~\cite{GatSho92,NeiPer24}.

The deterministic construction of normal bases has a long
history~\cite{Lun86,Len91,BacDri93}. Augot \& Camion\,\cite{AugCam94}
give a deterministic linear algebra construction requiring
$O(n^3+n^2\log q)$ operations in $\Fq$.  See
Gao\,\cite[Chap.~3]{Gao93}. Deterministic Krylov and Frobenius-form
methods \cite{Kel85,NeiPer24} can be employed to reduce the
deterministic cost to $\softO(n^\omega+n^2\log q)$ operations in
$\Fq$, where $\omega$ is the exponent of matrix multiplication (the
best known $\omega\approx 2.371339$ \cite{AlmDua25}).

Randomized algorithms are faster in many settings. The Las Vegas
algorithms of~\cite{GatGie90} construct elements of prescribed
additive order, and hence normal elements, by sampling and testing.
The fast Frobenius and modular-composition techniques of
\cite{KalSho98}, together with later fast modular-composition work
such as \cite{KedUma11}, provide important arithmetic primitives for
this approach. Building on these, \cite{GieJam19,GieJam21} developed
related Las Vegas constructions for normal elements and normal bases
in abelian and metacyclic extensions.
% These algorithms all serve as important background, but the present
% paper is entirely deterministic.

A basic distinction is between constructing one normal element and
explicitly writing the whole normal basis.  A normal element in the
input power basis has output size $n$, whereas the full list of
conjugates, or a conversion matrix between the power basis and the
normal basis, has output size $n^2$. Thus quadratic time is the
natural output-size barrier for explicit bases; note that conversion
between the power basis and a normal basis can be done in randomized
subquadratic time~\cite{KalSho98,KedUma11}. Once a normal element is
known, the full normal basis can be written out deterministically in
essentially quadratic time as well, as its conjugates form a Frobenius
orbit computable by the iterated Frobenius method of von zur Gathen \& Shoup\,\cite{GatSho92}.

%%%%%%%%% Shortened  %%%%%%%%%%%%%

% The complexity bounds in this paper should be read as follows. 
% While it could arguably be preferable to give all our runtimes 
% in an algebraic model, where we count $\FF_q$ operations at unit 
% cost, we are not able to achieve our complexity goals in this setting.
% Instead, some core operations, based on the breakthrough algorithms in~\cite{KedUma11}, 
% are analyzed over a boolean RAM, and their costs written using bit operations.
% Even if some other parts can be written in an algebraic model, it follows 
% that the overall cost of our algorithm is in a boolean model as well.

Although an algebraic cost model, counting $\Fq$-operations at unit
cost, would be preferable, our complexity goals appear out of reach
there: some core operations rest on the modular composition algorithm
of Kedlaya \& Umans\,\cite{KedUma11} and are analyzed over a boolean
RAM, so we state all costs in bit operations.

%%%%%%%%%%%%%%%%%%%%%%%%%%%%%%%%%%%

We reduce the construction of a normal element $\beta$ to the
computation of a size-$n$ circulant determinant with polynomial
entries over $\FF_q$, and we prove that in general, such a determinant
can be computed in near-linear time in its natural input/output size.
In the normal-basis application, this size is $O(n^2)$, resulting in
an essentially quadratic bound
\begin{equation}
  \label{eq:complexity}
  O_\epsilon((n^2\log q)^{1+\epsilon})+\softO(n\log^2 q)
\end{equation}
bit operations for the whole procedure, for any fixed $\epsilon > 0$
(here, the soft-O notation $\softO$ indicates the omission of
polylogarithmic factors). The bottleneck is a triangular
power-projection primitive described in Section~\ref{ssec:fast-tri}.
If we could strengthen its runtime to truly softly-linear, the same
reduction would immediately apply to our main result.

%Throughout, we assume the \emph{\bfseries large characteristic
%  hypothesis} $p>2n(n-1)$. We expect the remaining cases to reduce to
%this one, but leave a full treatment to future work.

% ~\ref{sec:nonsquarefree}, \ref{sec:charred} and
% \ref{sec:extdescent}.

The algorithm proceeds as follows. Let $\theta = x\bmod \Gamma$ be the
image of $x$ in $\E$.  For any $t\in\Fq$ define
\[
  \beta_t=(\theta-t)^{-1}\in\E.
\]
If we write the conjugates of $\theta$ as
$\theta_i=\theta^{q^i}\in\E$, then the conjugates of $\beta_t$ have
the form $(\theta_i-t)^{-1}$.  The Moore determinant criterion
\cite{Moo96} says that $\beta_t$ is normal if and only if the
determinant of its Moore matrix --- wherein the $i$th row
consists of the successive conjugates of
$(\theta_i-t)^{-1}$ --- is nonzero.  After a suitable row permutation and 
clearing some denominators, this
Moore matrix becomes a circulant matrix over $\E$ with entries that 
depend polynomially on $t$. Seeing 
the determinant of this matrix as a polynomial in $\E[\Tau]$, for a new
indeterminate $\Tau$, we prove
that all choices of $\beta_t$ are normal elements, except at most 
$n(n-1)$. We point out that the family $\beta_t = (\theta - t)^{-1}$ is closely
related to the elements used in Artin's classical proof of the normal
basis theorem \cite{Gao93,Art44,HacJun20}.  Normality fails only for
values of $t$ that are roots of a nonzero polynomial of degree at most
$n(n-1)$ in $t$, giving the same bound on the number of bad
parameters. Our choice of $(\theta-t)^{-1}$ leads us to a circulant
matrix over $\E[\Tau]$, whose evaluation at $\Tau=\theta$ still has
entries in $\E$; Artin's construction also results in a matrix over
$\E[\Tau]$, but its evaluation at $\theta$ yields a permutation
matrix, giving a somewhat cleaner proof.

A Moore matrix with entries in $\E[\Tau]$ is too costly to work
with. Instead, we use the equivalent \emph{trace Gram matrix}.  This
is a circulant matrix with entries $g_0(\Tau),\ldots,g_{n-1}(\Tau)$
that are polynomials over our base field $\Fq$, and normality of
$\beta_t$ is detected by the polynomial
\[
  D(\Tau)=\det\Circ(g_0(\Tau),\ldots,g_{n-1}(\Tau))\in\Fq[\Tau].
\]
We show that $D$ is nonzero, and that for $t\in\Fq$, $D(t) \ne 0$ if and only if $\beta_t$ 
is a normal element; in particular, $D$ has at most $n(n-1)$ roots in $\FF_q$. The polynomials $g_i$ have degree at most $2n-2$,
so $D$ itself has degree at most $2n(n-1)$.
In those cases where $q>n(n-1)$, this gives us a way of
constructing a normal element once $D(\Tau)$ is known,
by evaluating it at $n(n-1)+1$ elements in $\FF_q$.

The polynomial $D(\Tau)$ can be seen as the cyclic resultant $\Res_z(z^n-1,A(z,\Tau))\in\Fq[\Tau]$
for a polynomial $A$ of degree less than $n$ in $z$ and at most $2n-2$
in $\Tau$, built from $g_0,\dots,g_{n-1}$. The second part of this paper gives a deterministic algorithm for computing such determinants with near
linear time in the input/output size (in general $O(nm)$), with $m=\deg_\Tau A\geq 1$.

% The final sections discuss how to handle the remaining cases.
% Section~\ref{sec:nonsquarefree} reduces the case $p\divs n$ to the
% squarefree-degree case, Section~\ref{sec:charred} discusses the
% remaining characteristic range $p\le 2n(n-1)$, and
% Section~\ref{sec:extdescent} discusses the small-field case
% $q\le n(n-1)$.

The paper is organized as follows. Section~\ref{sec:prelim} fixes
notation and recalls the trace, Moore, and circulant criteria we rely
upon. Section~\ref{sec:criterion} establishes the circulant criterion
for normality: it constructs the trace Gram circulant, bounds the
number of bad parameters by $n(n-1)$, and reduces the construction of
a normal element to a single circulant determinant.
Section~\ref{sec:deterministic-circulants} gives the deterministic
near-linear-time algorithm for this polynomial circulant matrix
determinant; along the way it handles the case $p\mid n$ through a
squarefree reduction, reduces to monic inputs, and develops the fast
triangular-set machinery underlying the bottleneck primitive. Finally,
Section~\ref{sec:small-fields} removes the requirement that $q>n(n-1)$ by
passing to a small extension, constructing a normal element there, and
descending it deterministically to $\E/\Fq$.  The final near-quadratic
complexity \eqref{eq:complexity} is finally established in
Theorem~\ref{thm:main}.

%%% Local Variables:
%%% mode: LaTeX
%%% TeX-master: "detnorm"
%%% End:

\section{Preliminaries}
\label{sec:prelim}
We retain the notation of the introduction: $\E=\Fq[x]/(\Gamma)$ with
$\Gamma\in\Fq[x]$ monic irreducible of degree $n>1$, and $\theta$ the
image of $x$ in $\E$. Let $ \sigma(a)=a^q$ be the Frobenius
automorphism of $\E/\Fq$. Since finite fields are perfect, the
extension $\E/\Fq$ is separable. Its trace is
$\Tr_{\E/\Fq}(a)=\sum_{i=0}^{n-1} \sigma^i(a),$ and the trace pairing
$(a,b)\longmapsto \Tr_{\E/\Fq}(ab)$ is nondegenerate.  We will use the
standard trace-discriminant criterion: elements $a_1,\ldots,a_n\in\E$
form an $\Fq$-basis of $\E$ if and only if
\[
  \det\bigl(\Tr_{\E/\Fq}(a_i a_j)\bigr)_{1\le i,j\le n}\neq 0;
\]
see \cite[Theorem~2.37]{LidNie97}. 
We also use the classical determinant criterion \cite{Moo96}.  For
elements $a_0,\ldots,a_{n-1}$ in a finite extension of $\Fq$, define
the Moore matrix
\[
  \Moore(a_0,\ldots,a_{n-1}) =\bigl(a_j^{q^i}\bigr)_{0\le i,j<n}.
\]
Then
\[
  \det \Moore(a_0,\ldots,a_{n-1})\neq 0 \quad\Longleftrightarrow\quad
  a_0,\ldots,a_{n-1}\text{ are linearly independent over }\Fq.
\]
In particular, $\beta\in\E$ is normal over $\Fq$ if and only if the
Moore determinant of $\beta,\beta^q,\ldots,\beta^{q^{n-1}}$
is nonzero.

For a commutative ring $\W$ and elements $a_0,\ldots,a_{n-1}\in\W$, we
write
\[
  \Circ(a_0,\ldots,a_{n-1}) =
  \begin{pmatrix}
    a_0           & \cdots & a_{n-1} \\
    % a_{n-1}      & \cdots & a_{n-2} \\
    \vdots    & \ddots & \vdots \\
    a_1           & \cdots & a_0
  \end{pmatrix}
  =
  \bigl(a_{s-r\bmod n}\bigr)_{0\le r,s<n}
  \in\W^\nxn
\]
for the corresponding circulant matrix.  If
$A(z)=\sum_{i=0}^{n-1} a_i z^i,$ then
$\det\Circ(a_0,\ldots,a_{n-1})=\Res_z(z^n-1,A(z))\in\W$.
%This is a polynomial identity in the coefficients $a_i$, and therefore
%remains valid in every characteristic, including the case in which
%$z^n-1$ is not squarefree.

We assume standard fast polynomial arithmetic, so multiplication of
polynomials of degree $d$ over $\FF_q$ costs $\softO(d)$ operations in
$\Fq$; as a consequence, arithmetic operations in $\E$ take
$\softO(n)$ operations in $\Fq$.  The construction below requires all the
Frobenius conjugates of the power-basis generator $\theta$,
\[
  \theta_i=\theta^{q^i},\qquad 0\le i<n.
\]
We use the iterated-Frobenius method of von zur Gathen \&
Shoup\,\cite{GatSho92}, which computes all of
$\theta_0,\ldots,\theta_{n-1}$ with $\softO(n^2+n\log q)$ operations in
$\Fq$.

%%% Local Variables:
%%% mode: LaTeX
%%% TeX-master: "detnorm"
%%% End:

\section{A circulant criterion for normality}
\label{sec:criterion}

In this section, we first show how to construct a normal element
$\beta_t$ for $\E/\Fq$ from an element $t\in\Fq$ that does not cancel
the determinant of a certain Moore circulant matrix $B(T)$ over
$E[T]$.  In a second stage, we introduce a circulant matrix $\G(\Tau)$
over $\Fq[\Tau]$, with the property that $\det G=(\det B)^2$.

Our construction is similar to Artin's\,\cite{Art44}: for $t\in\Fq$,
let $\beta_t=(\theta-t)^{-1}\in\E$; see \cite[Chap.~3]{Gao93}.  This
is well-defined: since $n>1$, the irreducible polynomial $\Gamma$ has
no root in $\Fq$, and so $\Gamma(t)\neq 0$.

\subsection{The Moore circulant matrix}

Write $\theta_i=\theta^{q^i}$ for $0\le i<n$, with indices taken
modulo $n$.  The roots of $\Gamma$ are exactly the $\theta_i$, and
hence, if we let $\Tau$ be a new indeterminate, we have
\[
  \Gamma(\Tau)=\prod_{i=0}^{n-1}(\Tau-\theta_i) \in \E[\Tau].
\]
For $0\le i<n$, define
\[
  b_i(\Tau)=\frac{\Gamma(\Tau)}{\theta_i-\Tau}
           =-\prod_{j\neq i}(\Tau-\theta_j)\in\E[\Tau].
\]

\begin{lemma}[Cleared Moore circulant]
  \label{lem:moore-circ}
  Let
  \[
    B(\Tau)=\Circ(b_0(\Tau),\ldots,b_{n-1}(\Tau))\in\E[\Tau]^{n\times n}.
  \]
  Then, for every $t\in\Fq$, $\beta_t$ is normal over $\Fq$ if and only
  if $\det B(t)\neq 0$.  Moreover, $\det B(\Tau)$ is a nonzero
  polynomial in $\E[\Tau]$ of degree at most $n(n-1)$.
\end{lemma}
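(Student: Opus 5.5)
Three claims need to be established: the equivalence with normality at each $t\in\Fq$, the degree bound, and the fact that $\det B(\Tau)\neq 0$.

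\textbf{Equivalence.} Fix $t\in\Fq$. Since $\sigma$ fixes $t$, the conjugates of $\beta_t$ are $\sigma^i(\beta_t)=(\theta_i-t)^{-1}$. Applying $\sigma^i$ to the $j$th conjugate gives $(\theta_{i+j}-t)^{-1}$, so
\[
\Moore\bigl(\beta_t,\sigma(\beta_t),\ldots,\sigma^{n-1}(\beta_t)\bigr)=\bigl((\theta_{i+j}-t)^{-1}\bigr)_{0\le i,j<n}.
\]
This matrix is a Hankel-type matrix whose entries depend only on $i+j \bmod n$. Reversing the order of the rows (replace $i$ by $-r \bmod n$) turns the entry into $(\theta_{s-r}-t)^{-1}$, which is exactly the circulant $\Circ\bigl((\theta_0-t)^{-1},\ldots,(\theta_{n-1}-t)^{-1}\bigr)$. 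Now multiply every entry by the nonzero scalar $\Gamma(t)$. We have $\Gamma(t)\neq0$ because $\Gamma$ has no root in $\Fq$, as already noted. The entry becomes $\Gamma(t)/(\theta_k-t)=b_k(t)$, so this scaled matrix is $B(t)$. Hence
\[
\det B(t)=\pm\,\Gamma(t)^n\det\Moore(\ldots),
\]
and the Moore criterion gives the equivalence with normality.

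\textbf{Degree bound.} Each $b_i$ has degree $n-1$. Every term in the Leibniz expansion of an $n\times n$ determinant is a product of $n$ entries, so $\deg\det B\le n(n-1)$.

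\textbf{Nonvanishing.} This is the main step. I would evaluate at a conjugate of $\theta$ rather than at a point of $\Fq$. At $\Tau=\theta_k$ we have $b_i(\theta_k)=0$ for $i\neq k$, while $b_k(\theta_k)=-\prod_{j\neq k}(\theta_k-\theta_j)=-\Gamma'(\theta_k)$. This is nonzero because $\Gamma$ is separable: it is irreducible over a perfect field. Thus $B(\theta_0)$ is the circulant with a single nonzero coefficient $-\Gamma'(\theta_0)$ in position $0$, i.e.\ $-\Gamma'(\theta)$ times the identity. Its determinant is $(-\Gamma'(\theta))^n\neq 0$. Since evaluation $\E[\Tau]\to\E$, $\Tau\mapsto\theta$, is a ring homomorphism commuting with determinants, $\det B(\Tau)$ is a nonzero polynomial.

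The only delicate bookkeeping is in the equivalence: getting the index convention of the circulant, $(a_{s-r \bmod n})$, to match the reversed Moore matrix. A row permutation only changes the sign of the determinant, so any consistent choice works.
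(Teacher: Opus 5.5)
Your proof is correct and follows essentially the same route as the paper. You turn the Moore matrix of the conjugates into the circulant of the $(\theta_i-t)^{-1}$ by the row reversal $r\mapsto -r$, scale by $\Gamma(t)\neq 0$ to get $B(t)$, bound the degree entrywise, and prove nonvanishing by evaluating at a conjugate of $\theta$; the only cosmetic difference is that you take $\Tau=\theta_0$, where $B$ becomes $-\Gamma'(\theta)$ times the identity, while the paper uses a general $\theta_j$ and gets a scalar multiple of a cyclic permutation matrix.
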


\begin{proof}
  For $t\in\Fq$, the conjugates of $\beta_t$ are
  $\beta_t^{q^i}=(\theta_i-t)^{-1}$ for $0\le i<n$.  After the row 
  permutation $r\mapsto -r$, their Moore matrix is
  \[
    C(t)=\Circ\bigl((\theta_0-t)^{-1},\ldots,
    (\theta_{n-1}-t)^{-1}\bigr).
  \]
  Moore's criterion gives normality exactly when $\det C(t)\neq 0$.
  Since $\Gamma(t)\neq 0$ and $B(t)=\Gamma(t)C(t)$, this is equivalent
  to $\det B(t)\neq 0$.

  Each $b_i$ has degree $n-1$, so $\deg\det B\le n(n-1)$.  To prove
  that this determinant is not identically zero, evaluate at
  $\Tau=\theta_j$.  Then $b_i(\theta_j)=0$ for $i\neq j$, and
  $b_j(\theta_j)=-\Gamma'(\theta_j)\neq 0$, because $\Gamma$ is
  separable.  Thus $B(\theta_j)$ is a nonzero scalar multiple of a
  cyclic permutation matrix, and so its determinant is nonzero.
\end{proof}

This immediately yields the following corollary.
\begin{corollary}[Bad parameters]
  \label{cor:badset}
  The set of $t\in\Fq$ for which $\beta_t$ is not normal has size at
  most $n(n-1)$.  In particular, if $q>n(n-1)$, then some
  $(\theta-t)^{-1}$ is normal over $\Fq$.
\end{corollary}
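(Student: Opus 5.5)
The corollary follows directly from Lemma~\ref{lem:moore-circ}. Set $P(\Tau)=\det B(\Tau)\in\E[\Tau]$. By the lemma, $P$ is a nonzero polynomial of degree at most $n(n-1)$. Since $\E$ is a field, $P$ has at most $n(n-1)$ roots in $\E$, and hence at most $n(n-1)$ roots in the subfield $\Fq\subseteq\E$.

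The lemma also says that, for $t\in\Fq$, the element $\beta_t$ fails to be normal exactly when $P(t)=0$. So the set of bad parameters is
\[
  \{t\in\Fq : \beta_t \text{ is not normal}\} = \{t\in\Fq : P(t)=0\},
\]
which is contained in the root set of $P$ and therefore has size at most $n(n-1)$. We should check that $\beta_t$ is defined for every $t\in\Fq$. It is: since $n>1$, the polynomial $\Gamma$ has no root in $\Fq$, so $\theta-t\neq 0$ for all $t\in\Fq$.

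For the second claim, suppose $q>n(n-1)$. Then $\Fq$ has more elements than the bad set, so some $t\in\Fq$ satisfies $P(t)\neq 0$, and for that $t$ the element $(\theta-t)^{-1}$ is normal.

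There is no real obstacle here. The only subtlety is that $P$ has coefficients in $\E$ rather than in $\Fq$. The root bound still applies because we count roots over the field $\E$, which contains $\Fq$.
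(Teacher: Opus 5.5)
Your proof is correct and takes the paper's approach: the paper also derives the corollary directly from Lemma~\ref{lem:moore-circ}, since the bad parameters are exactly the roots in $\Fq$ of the nonzero polynomial $\det B(\Tau)\in\E[\Tau]$, whose degree is at most $n(n-1)$. The one point that needs care is that the coefficients lie in $\E$, so the root bound must be applied over the field $\E\supseteq\Fq$, and you handle this correctly.
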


\subsection{The base-field circulant}
\label{sec:gram}

The matrix $B$ proves the bound above, but its entries lie in
$\E[\Tau]$.  We now pass to a circulant over $\Fq[\Tau]$ without losing
any information at points of $\Fq$.  For $0\le k<n$, define
\[
  g_k(\Tau)=\sum_{i=0}^{n-1} b_i(\Tau)b_{i+k}(\Tau),
\]
with indices modulo $n$, and set
\[
  \G(\Tau)=\Circ(g_0(\Tau),g_1(\Tau),\ldots,g_{n-1}(\Tau)).
\]

\begin{lemma}[Cleared trace Gram circulant]
  \label{lem:gram-circ}
  The polynomials $g_0,\ldots,g_{n-1}$ lie in $\Fq[\Tau]$, and
  \[
    \G(\Tau)=B(\Tau)^{\top}B(\Tau).
  \]
  Hence, $\det \G(\Tau)=(\det B(\Tau))^2$, so for every $t\in\Fq$,
  \[
    \det\G(t)\neq 0
    \quad\Longleftrightarrow\quad
    \beta_t\text{ is normal over }\Fq.
  \]
  Moreover, $\deg g_k\le 2n-2$ for all $k$.
\end{lemma}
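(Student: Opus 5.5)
We will verify the four claims in order: that each $g_k$ has coefficients in $\Fq$, the matrix identity, the determinant consequence, and the degree bound.

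\emph{Coefficients in $\Fq$.} Extend the Frobenius $\sigma$ to $\E[\Tau]$ coefficientwise, fixing $\Tau$. The fixed ring of this extension is $\Fq[\Tau]$. Since $\sigma(\theta_j)=\theta_{j+1}$ with indices modulo $n$, we get
\[
  \sigma(b_i)=-\prod_{j\neq i}(\Tau-\theta_{j+1})=b_{i+1}.
\]
Hence
\[
  \sigma(g_k)=\sum_i b_{i+1}b_{i+1+k}=g_k
\]
after reindexing the cyclic sum. So $g_k\in\Fq[\Tau]$. Alternatively, $g_k(t)=\Tr_{\E/\Fq}(b_0(t)b_k(t))$, which exhibits it as a trace Gram entry.

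\emph{The identity $\G=B^\top B$.} With the convention $\Circ(a)=(a_{s-r \bmod n})_{r,s}$, we compute
\[
  (B^\top B)_{r,s}=\sum_{\ell}B_{\ell,r}B_{\ell,s}=\sum_{\ell} b_{r-\ell}\,b_{s-\ell}.
\]
Set $i=r-\ell$, which runs over all residues modulo $n$. The sum becomes $\sum_i b_i b_{i+(s-r)}=g_{s-r}$, which is exactly the $(r,s)$ entry of $\Circ(g_0,\ldots,g_{n-1})$. The only care needed is the index bookkeeping, so that the transpose produces $g_{s-r}$ rather than $g_{r-s}$. This is harmless anyway, since $g_{-k}=g_k$ by symmetry of the defining sum under $i\mapsto i+k$.

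\emph{Determinant consequence.} Multiplicativity gives $\det\G=(\det B^\top)(\det B)=(\det B)^2$ in $\E[\Tau]$, and both sides in fact lie in $\Fq[\Tau]$. Evaluation at $t\in\Fq$ is a ring homomorphism, so $\det\G(t)=(\det B(t))^2$. Since $\E$ is a field, this is nonzero if and only if $\det B(t)\neq0$. By Lemma~\ref{lem:moore-circ}, that holds if and only if $\beta_t$ is normal.

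\emph{Degree bound.} Each $b_i$ has degree $n-1$, so each product $b_ib_{i+k}$ has degree at most $2n-2$, and therefore $\deg g_k\le 2n-2$.

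No step is a real obstacle; the most error-prone part is the circulant index convention in the matrix identity.
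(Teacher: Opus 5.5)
Your proposal is correct and follows essentially the same route as the paper: Frobenius invariance via $\sigma(b_i)=b_{i+1}$ and cyclic reindexing gives $g_k\in\Fq[\Tau]$, the same index computation gives $(B^\top B)_{r,s}=g_{s-r}$, multiplicativity of the determinant combined with Lemma~\ref{lem:moore-circ} gives the normality criterion, and $\deg b_i=n-1$ gives the degree bound. Your extra remarks, the trace interpretation $g_k=\Tr_{\E/\Fq}(b_0b_k)$ and the symmetry $g_{-k}=g_k$, are also correct, though not needed for the argument.
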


\begin{proof}
  Extend Frobenius to $\E[\Tau]$ by fixing $\Tau$.  Since
  $\sigma(b_i)=b_{i+1}$, cyclic reindexing gives
  \[
    \sigma(g_k)=\sum_i b_{i+1}b_{i+k+1}=g_k.
  \]
  Thus $g_k\in\Fq[\Tau]$.  The degree bound follows from
  $\deg b_i=n-1$.  By the definition of a circulant matrix,
  \[
    (B^{\top}B)_{r,\ell} =\sum_{i=0}^{n-1} b_{r-i}b_{\ell-i}
    =\sum_{j=0}^{n-1} b_jb_{j+\ell-r} =g_{\ell-r},
  \]
  so $B^{\top}B=\G$.  Therefore $\det\G(t)=\det(B(t))^2$, and
  the equivalence with normality follows from
  Lemma~\ref{lem:moore-circ}.
\end{proof}

Note that for $t\in\Fq$, $\Gamma(t)^{-2}\G(t)$ is the trace Gram matrix of the conjugates
of $\beta_t$.  Indeed,
if $0\le r,\ell<n$, then
\[
  \Gamma(t)^{-2}\G(t)_{r,\ell}
  =\sum_{i=0}^{n-1}
     \frac{1}{(\theta_i-t)(\theta_{i+\ell-r}-t)}
  =\Tr_{\E/\Fq}\bigl(\beta_t^{q^r}\beta_t^{q^\ell}\bigr).
\]

\subsection{Computing the first row of $\G$}

We now compute the first row of $\G$ efficiently by the same product
calculation that underlies Lagrange interpolation at the conjugates of
$\theta$.  The polynomial
\[
  \frac{\Gamma(\Tau)}{\Tau-\theta_i}\in\Fq[\Tau]
\]
vanishes at all conjugates of $\theta$ except $\theta_i$, where its
value is $\Gamma'(\theta_i)$.  The cyclic sums used below are fixed by
Frobenius and therefore lie in $\Fq[T]$.  Thus the needed values are
obtained from
\[
  \Gamma'(\theta)=\prod_{j=1}^{n-1}(\theta-\theta_j)
\]
by deleting one factor.

\begin{lemma}[First row formula]
  \label{lem:firstrow}
  Let $\Gamma'$, $\Gamma''\in\Fq[\Tau]$ denote first and second formal
  derivative of $\Gamma$ respectively.  Then
  \[
    g_0(\Tau)=\Gamma'(\Tau)^2-\Gamma(\Tau)\Gamma''(\Tau)\in\Fq[\Tau].
  \]
  For $1\le j<n$, put
  \[
    d_j=\theta-\theta_j,
    \qquad
    h_j=\prod_{\substack{1\le m<n\\ m\neq j}} d_m\in\E.
  \]
  Empty products are understood to be $1$.  For $1\le k<n$, let
  $H_k(x)\in\Fq[x]$ be the representative of $h_k+h_{n-k}$ of degree
  less than $n$.  Then
  \[
    g_k(\Tau)=\Gamma(\Tau)H_k(\Tau).
  \]
  In particular, $\deg H_k\le n-2$ and $g_{n-k}=g_k$.
\end{lemma}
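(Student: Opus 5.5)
The plan is to work directly with $b_i(\Tau)=-\Gamma(\Tau)/(\Tau-\theta_i)$ and expand $g_k=\sum_i b_ib_{i+k}$ in $\E[\Tau]$.

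For $g_0$, we have $g_0=\sum_i \Gamma^2/(\Tau-\theta_i)^2$. Logarithmic differentiation gives $\Gamma'/\Gamma=\sum_i (\Tau-\theta_i)^{-1}$. Differentiating once more gives $(\Gamma''\Gamma-\Gamma'^2)/\Gamma^2=-\sum_i(\Tau-\theta_i)^{-2}$. Multiplying by $\Gamma^2$ yields $g_0=\Gamma'^2-\Gamma\Gamma''$. This is a polynomial identity in $\E(\Tau)$, valid in any characteristic, since formal derivatives of rational functions behave as usual.

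For $k\ge1$, I will use the partial fraction identity
\[
\frac{1}{(\Tau-\theta_i)(\Tau-\theta_{i+k})}=\frac{1}{\theta_i-\theta_{i+k}}\Bigl(\frac{1}{\Tau-\theta_i}-\frac{1}{\Tau-\theta_{i+k}}\Bigr).
\]
This makes sense because $\theta_i\ne\theta_{i+k}$ by separability. It gives
\[
g_k=\Gamma(\Tau)\sum_i \frac{1}{\theta_i-\theta_{i+k}}\Bigl(\frac{\Gamma}{\Tau-\theta_i}-\frac{\Gamma}{\Tau-\theta_{i+k}}\Bigr).
\]
So $g_k=\Gamma\cdot P_k$, where $P_k\in\E[\Tau]$ has degree at most $n-1$. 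Since $g_k$ and $\Gamma$ lie in $\Fq[\Tau]$ by Lemma~\ref{lem:gram-circ}, $P_k$ is Frobenius-invariant, hence lies in $\Fq[\Tau]$. A polynomial of degree $<n$ is determined by its values at the $n$ distinct points $\theta_j$, and being over $\Fq$, it is determined by its value at $\theta=\theta_0$ alone via $P_k(\theta_j)=\sigma^j(P_k(\theta))$. So the remaining step is to evaluate $P_k(\theta)$. Only the terms with $i=0$ or $i+k\equiv 0$, that is $i=n-k$, survive, because $\Gamma/(\Tau-\theta_i)$ vanishes at $\theta_0$ unless $i=0$. Its value there is $\Gamma'(\theta)=\prod_{j\ge1}d_j$. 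This gives
\[
P_k(\theta)=\Gamma'(\theta)\Bigl(\frac{1}{\theta-\theta_k}+\frac{1}{\theta-\theta_{n-k}}\Bigr)=h_k+h_{n-k},
\]
where the second summand comes from $i=n-k$, with sign $-1/(\theta_{n-k}-\theta)$. Hence $P_k(\theta)=H_k(\theta)$. Both $P_k$ and $H_k$ are in $\Fq[x]$ of degree $<n$, and $\Gamma$ is the minimal polynomial of $\theta$, so $P_k=H_k$. The symmetry $g_{n-k}=g_k$ is then immediate, since $H_k=H_{n-k}$. It also follows directly from reindexing $i\mapsto i+k$ in the defining sum.

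For the degree bound $\deg H_k\le n-2$, look at the $\Tau^{n-1}$ coefficient of $P_k$. It is $\sum_i(\theta_i-\theta_{i+k})^{-1}(1-1)\cdot(-1)=0$, since each $\Gamma/(\Tau-\theta)$ is monic of degree $n-1$. The same bound also follows from $\deg g_k\le 2n-2$ in Lemma~\ref{lem:gram-circ}.

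The main obstacle is sign and index bookkeeping, namely the sign of $b_i$ and the orientation $\theta_i-\theta_{i+k}$ versus $\theta-\theta_j$ when evaluating at $\theta$. I would double check these with $n=2$.
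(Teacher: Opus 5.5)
Your proof is correct and is essentially the paper's argument: write $g_k=\Gamma\cdot Q_k$, show $Q_k\in\Fq[\Tau]$ has degree at most $n-2$, evaluate at $\theta$ where only $i=0$ and $i=n-k$ contribute, and conclude because $\Gamma$ is the minimal polynomial of $\theta$. The partial-fraction step is harmless but unnecessary: each summand $\Gamma/((\Tau-\theta_i)(\Tau-\theta_{i+k}))=\prod_{j\ne i,i+k}(\Tau-\theta_j)$ is already a polynomial of degree $n-2$, which gives the degree bound and the two surviving values $h_k$, $h_{n-k}$ at once.
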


\begin{proof}
  Since
  \[
    g_0(\Tau)=\Gamma(\Tau)^2
       \sum_{i=0}^{n-1}\frac{1}{(\Tau-\theta_i)^2},
  \]
  the logarithmic derivative identity
  \[
    \frac{\Gamma'(\Tau)}{\Gamma(\Tau)}
      =\sum_{i=0}^{n-1}\frac{1}{\Tau-\theta_i}
  \]
  gives, after differentiation,
  \[
    \sum_i\frac{1}{(\Tau-\theta_i)^2}
      =\frac{\Gamma'(\Tau)^2-\Gamma(\Tau)\Gamma''(\Tau)}
             {\Gamma(\Tau)^2}.
  \]
  This proves the formula for $g_0$.

  Now let $1\le k<n$.  From the definition of $g_k$,
  \[
    g_k(\Tau)=\Gamma(\Tau)Q_k(\Tau), \quad \mbox{where}\quad
    Q_k(\Tau)=\sum_{i=0}^{n-1}
    \frac{\Gamma(\Tau)}{(\Tau-\theta_i)(\Tau-\theta_{i+k})}.
  \]
  Each summand is a polynomial of degree $n-2$, and the sum is fixed by
  Frobenius, so $Q_k\in\Fq[\Tau]$ and $\deg Q_k\le n-2$.

  Evaluate $Q_k$ at $\theta=\theta_0$.  All summands vanish except
  those with $i=0$ and $i=n-k$.  Hence
  \[
    Q_k(\theta)
      =\frac{\Gamma'(\theta)}{\theta-\theta_k}
       +\frac{\Gamma'(\theta)}{\theta-\theta_{n-k}}.
  \]
  Since
  \[
    \Gamma'(\theta)=\prod_{j=1}^{n-1}(\theta-\theta_j)
                   =\prod_{j=1}^{n-1}d_j,
  \]
  this gives
  \[
    Q_k(\theta)=h_k+h_{n-k}.
  \]
  The evaluation map from polynomials of degree less than $n$ to $\E$
  is an isomorphism, so $Q_k=H_k$.  The identity $g_{n-k}=g_k$ follows
  either from this formula or by reindexing the defining sum for
  $g_k$.
\end{proof}

\begin{proposition}[Cost of building $\G$]
  \label{prop:buildG}
  The polynomials $g_0,\ldots,g_{n-1}$ can be computed
  in $\softO(n^2+n\log q)$ operations in $\Fq$.  They
  satisfy $\deg g_k\le 2n-2$ for all~$k$.
\end{proposition}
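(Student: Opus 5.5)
The plan is to follow Lemma~\ref{lem:firstrow} directly. The polynomial $g_0=\Gamma'^2-\Gamma\Gamma''$ costs $\softO(n)$ operations in $\Fq$. Each $g_k$ with $k\ge 1$ is the product $\Gamma\cdot H_k$. Since $g_{n-k}=g_k$, only the indices $1\le k\le \lfloor n/2\rfloor$ need to be computed. Each product costs $\softO(n)$, so all multiplications by $\Gamma$ together cost $\softO(n^2)$. The degree bound $\deg g_k\le 2n-2$ is already in Lemma~\ref{lem:gram-circ}, and it is also visible from $\deg H_k\le n-2$. What remains is to compute the elements $h_1,\dots,h_{n-1}\in\E$ within the stated budget.

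First, compute all conjugates $\theta_0,\dots,\theta_{n-1}$ by the iterated-Frobenius method of von zur Gathen \& Shoup, for $\softO(n^2+n\log q)$ operations in $\Fq$, as recalled in Section~\ref{sec:prelim}. The elements $d_j=\theta-\theta_j$ then follow by $n-1$ subtractions in $\E$, which costs $O(n^2)$.

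Next, compute the leave-one-out products $h_j=\prod_{m\neq j}d_m$. Division by $d_j$ is available, since $d_j\neq 0$ by separability. However, inverting $n-1$ elements of $\E$ at $\softO(n)$ each would still fit the budget, so plain division is acceptable here.

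I would nonetheless use prefix and suffix products instead, because they avoid inversions entirely. Set $P_j=d_1\cdots d_j$ and $S_j=d_j\cdots d_{n-1}$. Then $h_j=P_{j-1}S_{j+1}$, with empty products equal to $1$. This takes about $3n$ multiplications in $\E$, each costing $\softO(n)$ operations in $\Fq$, for a total of $\softO(n^2)$.

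Then set $H_k$ to be the power-basis representative of $h_k+h_{n-k}$. Since elements of $\E$ are already stored as polynomials of degree less than $n$, this representative comes for free.

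Summing the three stages gives $\softO(n^2+n\log q)$ operations in $\Fq$. The only possible obstacle is the conjugate computation, which contributes the $n\log q$ term. This step is quoted from \cite{GatSho92} rather than reproved, so no step is genuinely hard once Lemma~\ref{lem:firstrow} is available. The one point to check carefully is that no stage needs quadratically many $\E$-operations. Computing each $h_j$ naively as a product of $n-2$ factors would cost $\softO(n^3)$, and the prefix/suffix trick is exactly what avoids this.
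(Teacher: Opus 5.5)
Your proposal is correct and follows essentially the same route as the paper. Both compute the Frobenius table via von zur Gathen--Shoup, form the differences $d_j$, and obtain the leave-one-out products $h_j$ by a forward/backward sweep using $O(n)$ multiplications in $\E$; both then set $g_0=\Gamma'^2-\Gamma\Gamma''$ and $g_k=\Gamma H_k$ for $k\le\lfloor n/2\rfloor$, using the symmetry $g_{n-k}=g_k$ for the remaining indices.
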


\begin{proof}
  Compute the Frobenius table $\theta_0,\theta_1,\ldots,\theta_{n-1}$
  by the method recalled in Section~\ref{sec:prelim}.  This costs
  $\softO(n^2+n\log q)$ operations over $\Fq$.  Form
  $d_j=\theta-\theta_j$ for $1\le j<n$.  Then compute all products
  \[
    h_j=\prod_{\substack{1\le m<n\\m\neq j}}d_m
  \]
  by one forward sweep and one backward sweep.  Set
  $  L_0=1,
    \qquad
    L_j=d_1d_2\cdots d_j\quad(1\le j<n).$
  Then sweep backwards with $S=1$; at step $j=n-1,n-2,\ldots,1$, set
  \[
    h_j=L_{j-1}S,
    \qquad
    S\leftarrow Sd_j.
  \]
  This uses $O(n)$ multiplications in $\E$, hence $\softO(n^2)$
  operations over $\Fq$.  Compute
    $g_0=\Gamma'^2-\Gamma\Gamma''.$
  
  For $1\le k\le \lfloor n/2\rfloor$, let $H_k$ be the representative
  of $h_k+h_{n-k}$ of degree less than $n$, and set
  \[
    g_k=\Gamma H_k,
    \qquad
    g_{n-k}=g_k.
  \]
  If $k=n-k$, this assignment is made only once.  The products
  $\Gamma H_k$ cost $\softO(n^2)$ operations in total.  The formulas
  and degree bounds are those of Lemma~\ref{lem:firstrow}.
\end{proof}

Set
\[
  D(\Tau)=\det\G(\Tau)
  =\det\Circ(g_0(\Tau),\ldots,g_{n-1}(\Tau))\in\Fq[\Tau].
\]
For every $t\in\Fq$, $D(t)\neq 0$ if and only if $\beta_t$ is normal
over $\Fq$.  The degree of $D$ may be as large as $2n(n-1)$, but the
number of bad parameters is still bounded by $n(n-1)$, by
Corollary~\ref{cor:badset}.

%%% Local Variables:
%%% mode: LaTeX
%%% TeX-master: "detnorm"
%%% End:

\section{Computing determinants of circulant polynomial matrices}
\label{sec:deterministic-circulants}

We now isolate the remaining determinant computation.

\begin{problem}[Circulant determinant]
  \label{prob:circulant-det}
  Given polynomials $a_0(\Tau),\ldots,a_{n-1}(\Tau)\in\Fq[\Tau]$ of
  degree at most $m$, for some $m > 0$, compute
  \[
    \det\Circ(a_0(\Tau),a_1(\Tau),\ldots,a_{n-1}(\Tau)).
  \]
  Equivalently, for $ A(z,\Tau)=\sum_{i=0}^{n-1}a_i(\Tau)z^i$,
  compute $ \Res_z(z^n-1,A(z,\Tau))$.
%  Let $\Cdet(n,D)$ denote the deterministic cost of this problem.
\end{problem}

%\begin{theorem}[Reduction to a cyclic determinant]
%  \label{thm:large-field-reduction}
%  Assume $q>n(n-1)$.  Given $\E=\Fq[x]/(\Gamma)$ with $\Gamma$
%  irreducible of degree $n$, one can deterministically construct a
%  normal element of $\E/\Fq$ in
%  \[
%    \softO\bigl(n^2+n\log q+\Cdet(n,2n-2)\bigr)
%  \]
%  operations over $\Fq$.
%\end{theorem}
%\begin{proof}
%  By Proposition~\ref{prop:buildG}, construct
%  $g_0,\ldots,g_{n-1}$ in $\softO(n^2+n\log q)$ operations over $\Fq$.
%  Compute
%  \[
%    R(\Tau)=\det\Circ(g_0(\Tau),\ldots,g_{n-1}(\Tau))
%  \]
%  using the cyclic determinant routine.  Choose any $n(n-1)+1$ distinct
%  elements of $\Fq$ and evaluate $R$ at them.  By
%  Corollary~\ref{cor:badset}, at least one of these values, say $t$,
%  satisfies $R(t)\neq 0$.  Then $(\theta-t)^{-1}$ is normal over $\Fq$.
%  The multipoint evaluation and the final inversion in $\E$ are lower
%  order terms.
%\end{proof}

%This reduces the large-field normal element problem to the structured
%circulant determinant problem above.

Let $\Phi(z)=z^n-1$. We prove the following theorem in the rest of the
section.
\begin{theorem}[polynomial circulant determinant]
  \label{thm:circulant-main}
  Suppose that $\FF_q$ is presented as $\FF_p[y]/(\Lambda(y))$, for some
  irreducible polynomial $\Lambda \in \FF_p[y]$. Then the polynomial
  \[
  \det C(\Tau)=\Res_z(\Phi(z),A(z,\Tau))\in\FF_q[\Tau]
  \]
  can be computed using $O_\epsilon((nm\log q)^{1+\epsilon})$ bit
  operations, for any $\epsilon > 0$.
\end{theorem}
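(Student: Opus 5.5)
We write $p$ for the characteristic and $n=p^e n'$ with $p\nmid n'$.

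The first step reduces to the squarefree case. Over $\Fq$ we have $z^n-1=(z^{n'}-1)^{p^e}$. Multiplicativity of the resultant in its first argument then gives
\[
  \Res_z(\Phi,A)=\Res_z\bigl(z^{n'}-1,A\bigr)^{p^e}.
\]
Next reduce $A$ modulo $z^{n'}-1$: this folds the coefficients $a_i$ with $i\equiv i'\pmod{n'}$ into a single coefficient $\tilde a_{i'}$, in linear time. The resultant is invariant under this reduction, since $\Phi$ is monic. Raising the result to the $p^e$-th power is cheap: it means applying Frobenius to the coefficients and substituting $\Tau\mapsto\Tau^{p^e}$, which costs $O(nm)$ operations in $\Fq$ plus coefficient powerings. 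From now on we may therefore assume $p\nmid n$, so that $\Phi$ is squarefree.

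The second step is a normalization and a change of point of view. Interpret
\[
  R(\Tau)=\Res_z(\Phi,A)=\prod_{\zeta^n=1}A(\zeta,\Tau),
\]
which is the norm of $A(z,\Tau)$ from $\Fq[\Tau][z]/(\Phi)$ down to $\Fq[\Tau]$. Its degree is at most $nm$. To compute $R$ without working with polynomial entries, compute it modulo a power of $\Tau$ (or evaluate–interpolate over an extension of degree $O(\log(nm))$ if $q$ is too small). A shift $\Tau\mapsto\Tau+c$ with $A(z,c)$ invertible mod $\Phi$ makes $A(z,0)$ a unit; finding such a $c$ deterministically costs at most a small extension. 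Then
\[
  R=\Res_z\bigl(\Phi,A(z,0)\bigr)\cdot \mathrm N\bigl(1+\Tau\,U\bigr),
\]
where $U=\bigl(A(z,\Tau)-A(z,0)\bigr)/A(z,0)\bmod\Phi$. Using logarithms and exponentials of power series, which is valid to precision $nm+1$ only when $p>nm$, we can write $\log\mathrm N=\Tr\log$. In small characteristic one instead uses the Newton-identity or power-sum variant with divided powers, or falls back on a monic reduction: make $A$ monic in $\Tau$ after a change of variables, so that $R$ is the characteristic-like polynomial of multiplication by $A$ in an algebra of dimension $nm$.

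The key step is the monic version. Suppose $A$ is monic of degree $m$ in $\Tau$ after an invertible transformation. Then $R(\Tau)$ is, up to a constant, the minimal or characteristic polynomial in $\Tau$ of the algebra $\mathcal A=\Fq[z,\Tau]/(\Phi(z),A(z,\Tau))$. This algebra has dimension $nm$ and is given by a triangular set. The characteristic polynomial of $\Tau$ on $\mathcal A$ is recovered from the traces $\Tr_{\mathcal A}(\Tau^k)$ for $k\le nm$, or, in small characteristic, from the Berlekamp–Massey sequence $\ell(\Tau^k)$ for a suitable deterministic linear form $\ell$. Since the characteristic polynomial is needed, not just the minimal polynomial, the trace form is the natural deterministic choice: it is computable from the triangular set by transposed multiplication. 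We then need the power projections $\ell(\Tau^k)$, $0\le k<2nm$, in $\mathcal A$, and this is the triangular-set power projection primitive. Via transposition, it is dual to modular composition in $\mathcal A$. Using the Kedlaya–Umans approach, extended to two-variable triangular sets by reducing multivariate composition modulo $(\Phi,A)$, gives $O_\epsilon((nm\log q)^{1+\epsilon})$ bit operations.

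\textbf{Main obstacle.} The main obstacle is this last point: obtaining Kedlaya–Umans-type power projection in the bivariate triangular algebra deterministically with exponent $1+\epsilon$. The plan is to use baby-step/giant-step with the Kedlaya–Umans multivariate multipoint evaluation over lifts to $\ZZ$. Handling the trace recovery in characteristic $p\le nm$ is secondary; there one can use Newton sums over the Witt-style lift, or the Berlekamp–Massey alternative.
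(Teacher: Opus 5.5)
Your first step (passing to the squarefree part of $z^n-1$ and raising to the power $p^e$ at the end) matches the paper. So does your final stage: resultant $=$ characteristic polynomial of $\Tau$ in a bivariate triangular algebra, traces on the monomial basis, Newton identities over a $p$-adic lift of logarithmic precision, and Kedlaya--Umans-type bivariate power projection. The paper imports these last two ingredients from Bostan et al.\ and Poteaux--Schost rather than rebuilding them.

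\textbf{The gap.} The genuine gap is the step you dispose of with ``suppose $A$ is monic of degree $m$ in $\Tau$ after an invertible transformation''. After your first step, $\Phi=z^n-1$ is squarefree but reducible. The leading coefficient $\ell(z)$ of $A$ in $\Tau$ may therefore be a zero divisor modulo $\Phi$, so it cannot simply be inverted.

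The only concrete mechanism you offer is a shift $\Tau\mapsto\Tau+c$ with $A(z,c)$ a unit, implicitly followed by reversal in $\Tau$. That is sound in principle, but a good $c$ is exactly a non-root of $R=\Res_z(\Phi,A)$, and $R$ can have about $nm$ roots in $\Fq$. For instance, take $n\mid q-1$ and interpolate $A$ over the $n$-th roots of unity so that the $A(\zeta,\Tau)$ have pairwise distinct roots in $\Fq$, one of them of lower degree so that $\ell(\zeta)=0$. Such an input can make the first $nm-1$ candidates of any fixed deterministic enumeration bad. Each direct test (forming $A(z,c)\bmod\Phi$ and a gcd with $\Phi$) costs $\Omega(n)$, so the search costs $\Omega(n^2m)$, well above $O((nm)^{1+\epsilon})$. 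Passing to a small extension only guarantees that a good $c$ exists, not that it is cheap to find.

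Your side routes have the same problem. Evaluation--interpolation also costs $\Omega(n^2m)$. The log/exp route, as described, manipulates elements of $(\Fq[z]/(\Phi))[\Tau]/(\Tau^{nm+1})$, which have $\Theta(n^2m)$ coefficients.

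\textbf{The missing idea.} You should split instead of normalizing. Since $\Phi$ is squarefree (which is why the first step is needed), dynamic evaluation produces a \emph{monic form}. This is a factorization $\Phi=H_1\cdots H_s$ into monic, pairwise coprime factors, obtained by repeated gcds with leading coefficients. With $A_i=A\bmod H_i$, either $A_i=0$ (and the answer is $0$), or the leading coefficient $b_i$ of $A_i$ in $\Tau$ is a unit modulo $H_i$. In the latter case $P_i=b_i^{-1}A_i\bmod H_i$ is monic of some degree $e_i\le m$. This costs $\softO(nm)$ operations in $\Fq$. Multiplicativity of the resultant then gives
\[
  \Res_z(\Phi,A)=\prod_{i=1}^{s}\Res_z(H_i,b_i)\,\Res_z(H_i,P_i).
\]
The scalar factors $\Res_z(H_i,b_i)$ are cheap. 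Each $\Res_z(H_i,P_i)$ is the characteristic polynomial of $\Tau$ in $\Fq[z,\Tau]/(H_i,P_i)$, an algebra of dimension $d_ie_i$ with $d_i=\deg H_i$ and $\sum_i d_ie_i\le nm$. Your monic-case machinery then applies factor by factor within the claimed bound.

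Finally, drop the Berlekamp--Massey fallback. It only yields the minimal polynomial of $\Tau$, which can be a proper divisor of the characteristic polynomial (e.g.\ for $A=\Tau^m$). In small characteristic, the $p$-adic lift of the Newton identities is the route to use.
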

For the trace-Gram circulants constructed in
Proposition~\ref{prop:buildG}, one has $m\le 2n-2$, and hence $nm \in
O(n^2)$. 

If $A$ is monic in $\Tau$, the problem reduces to computing the
characteristic polynomial of multiplication by $\Tau$ in the
$\FF_q$-algebra $\A=\FF_q[z,\Tau]/(\Phi(z), A(z,\Tau))$, and $A$ being
monic allows us to use existing fast algorithms for triangular sets to
conclude.

There is however no guarantee that $A$ is monic in $\Tau$. The so called
{\em dynamic evaluation} techniques allow us to reduce to this
situation, but it requires $\Phi$ be squarefree, which we don't
assume (this is equivalent to $\gcd(n,p)=1$). We handle this issue in
Subsection~\ref{ssec:sqfree}: due to its special shape, computing the
squarefree decomposition of $\Phi$ is straightforward, and the
multiplicativity of the resultant allows us to work with the
squarefree part $\varphi$ of $\Phi$ instead of $\Phi$ itself.

In Subsection~\ref{ssec:monic}, we reduce to the case where $A$ is
monic. The obvious idea is to divide $A$ by its leading coefficient
$\ell$ in $\Tau$, working modulo $\varphi$. However, we cannot assume
that $\varphi$ is irreducible, so $\ell$ may have a non-trivial GCD
with $\varphi$. This gives us a factorization of $\varphi$ into
$\gcd(\varphi,\ell)$, modulo which $\ell$ vanishes, and
$\varphi/\gcd(\varphi,\ell)$, modulo which $\ell$ is a unit.
Continuing this process leads us to a factorization of $\varphi$ into
coprime factors $H_1,\dots,H_s$, modulo which we can guarantee that
the leading coefficient of $A$ is invertible.

Subsection~\ref{ssec:fast-tri} handles the core question of computing
$\Res_z(H(z),P(z,\Tau))$, with $H$ and $P$ monic in respectively $z$
and $\Tau$. We observe that this is equivalent to computing the
characteristic polynomial of the endomorphism of multiplication by
$\Tau$ modulo $(H(z),P(z,\Tau))$, which we proceed to handle by a
version of Leverrier's algorithm. 

%%%%%%%%%%%%%%%%%%%%%%%%%%%%%%%%%%%%%%%%%%%%%%%%%%%%%%%%%%%%

\subsection{Reduction to the squarefree case}\label{ssec:sqfree}

The first step in the whole procedure is to compute the squarefree
decomposition of $\Phi$. Write $n=p^c r$, with $\gcd(r,p)=1$;
it follows that in $\FF_q[z]$, we have
\[ \Phi = z^n-1 = \varphi^{p^c},\]
with $\phi=z^r-1$ squarefree. The integers $c$ and $r$ are computed on a Boolean RAM as follows:
given the base-2 expansion of $n$, we can obtain its base-$p$
expansion in $\softO(\log n)$ bit
operations~\cite[Theorem~9.17]{GatGer13}. From this, we deduce $c$ and
$r$ (in base 2) in softly linear time again.

Let $B(z,\Tau) = A(z,\Tau) \bmod \varphi$; we can compute it using
$\softO(n m)$ operations in $\FF_q$ by Euclidean division
coefficientwise in $\Tau$. By multiplicativity of the resultant, we
deduce that
\[ \Res_z(\Phi(z),A(z,\Tau)) = \Res_z(\varphi(z),B(z,\Tau))^{p^c},\]
so we can focus on computing $\Res_z(\varphi(z),B(z,\Tau))$. This is a
polynomial of degree at most $r m$ in $\FF_q[\Tau]$, and raising it to
$p^c$-th power to obtain $\Res_z(\Phi(z),A(z,\Tau))$ also takes $\softO(n
m)$ operations in $\FF_q$ by repeated squaring.

%%%%%%%%%%%%%%%%%%%%%%%%%%%%%%%%%%%%%%%%%%%%%%%%%%%%%%%%%%%%

\subsection{Reduction to monic inputs}\label{ssec:monic}

In order to compute $\Res_z(\varphi(z),B(z,\Tau))$, the second step is
to reduce to the case where $B(z,\Tau)$ is monic in $\Tau$. If
$\varphi$ were irreducible, this would amount to inverting the leading
coefficient of $B$ modulo $\varphi$, but all we know is that $\varphi$
is squarefree. Instead, we use the notion of {\em monic form}
from~\cite[Definition~4.2]{DahMor06}: a monic form for the pair
$(\varphi,B)$ is a sequence $((H_1,B_1,c_1),\dots,(H_s,B_s,c_s))$,
that satisfies the following:
\begin{itemize}
\item for all $i$, $H_i$ and $c_i$ are in $\FF_q[z]$, $H_i$ is monic
  of positive degree and $c_i$ is reduced with respect to $H_i$
\item $\varphi=H_1 \cdots H_s$
\item for all $i$, $B_i = B \bmod H_i$ in $\FF_q[z,\Tau]$
\item for all $i$, either $B_i=c_i=0$ or $P_i=c_i B_i \bmod H_i$ is
  monic in $\Tau$.
\end{itemize}
Using Proposition~2.4 in~\cite{DahMor06} and Algorithm~{\tt monic}
from that reference, a monic form can be computed in~$\softO(r m)$
operations in $\FF_q$.

Using again the multiplicativity of the resultant, we deduce that
\[ \Res_z(\varphi(z),B(z,\Tau)) = \prod_{i \le s} \Res_z(H_i(z),B(z,\Tau)) = \prod_{i \le s} \Res_z(H_i(z),B_i(z,\Tau)).\]
If any $B_i$ vanishes, this implies that $\Res_z(\varphi(z),B(z,\Tau))=0$, so we are done.
Else, we obtain
\[ \Res_z(\varphi(z),B(z,\Tau)) = \prod_{i \le s} \Res_z(H_i(z), b_i(z) P_i(z,\Tau)) =R_1 R_2\]
with $b_i$ the leading coefficient of $B_i$ in $\Tau$ and where
\[ R_1 = \prod_{i \le s} \Res_z(H_i(z), b_i(z)) \quad\text{and}\quad
R_2 = \prod_{i \le s} \Res_z(H_i(z), P_i(z,\Tau))\] Computing each scalar
resultant $\Res_z(H_i(z), b_i(z))$ is done in quasi-linear time in the
degree of the corresponding $H_i$, so it remains to compute each
$\Res_z(H_i(z), P_i(z,\Tau))$. If we write $d_i = \deg_z H_i$ and $e_i
=\deg_\Tau P_i$, the following subsection establishes that it can be done
in $O_\epsilon((d_ie_i\log q)^{1+\epsilon})$ bit operations for any
$\epsilon > 0$.

%%%%%%%%%%%%%%%%%%%%%%%%%%%%%%%%%%%%%%%%%%%%%%%%%%%%%%%%%%%%

\subsection{Fast triangular-set machinery}
\label{ssec:fast-tri}

Let $\sfR$ be a ring and set
\[
  \A=\sfR[z,\Tau]/(H(z),P(z,\Tau)),
\]
where $H\in\sfR[z]$ is monic of degree $d$ and $P\in \sfR[z,\Tau]$ is
monic of degree $e$ in $\Tau$, and reduced with respect to $H$. Then $\A$
is a free $\sfR$-module of rank $de$, with monomial basis $\mathcal B=(z^i
\Tau^j)_{0\le i < d, 0\le j< e}$. To any element $\alpha$ in $\A$, one
associates the endomorphism $\mu_\alpha:\A\to\A$ of multiplication by
$\alpha$; the trace $\Tr_{\A/\sfR}(\alpha) \in \sfR$ and the
characteristic polynomial $\chi_{\A/\sfR}(\alpha)\in \sfR[\Tau]$ of
$\alpha$ are by definition those of $\mu_\alpha$. By extension, the
trace, resp.\ characteristic polynomial of $a \in \sfR[z,\Tau]$ are
defined as those of its residue class in $\A$.

The following lemma is folklore; we will only need it over a field,
but establishing it over an arbitrary ring takes hardly more work.
\begin{lemma}
  $\chi_{\A/\sfR}(\Tau) = \Res_z(H(z),P(z,\Tau))$.
\end{lemma}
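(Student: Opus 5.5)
The plan is to prove the identity by viewing $\A$ as an algebra over the intermediate ring $\sfR' = \sfR[z]/(H(z))$. Then $\A \cong \sfR'[\Tau]/(\bar P(\Tau))$, where $\bar P$ is the image of $P$. Since $\bar P$ is monic of degree $e$ in $\Tau$, $\A$ is free over $\sfR'$ with basis $1,\Tau,\dots,\Tau^{e-1}$. Also $\sfR'$ is free over $\sfR$ with basis $1,z,\dots,z^{d-1}$.

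Here $\chi_{\A/\sfR}(\Tau)$ means the characteristic polynomial of $\mu_\Tau$, written in a fresh variable, say $Y$. So the claim reads $\det(Y\cdot I - \mu_\Tau) = \Res_z(H(z),P(z,Y))$ in $\sfR[Y]$. The argument has two steps.

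\textbf{Step 1 (relative characteristic polynomial).} Over $\sfR'$, the matrix of $\mu_\Tau$ in the basis $(\Tau^j)$ is the companion matrix of $\bar P$. The standard companion-matrix determinant computation holds over any commutative ring, so
\[
\det\nolimits_{\sfR'[Y]}(Y - \mu_\Tau) = \bar P(Y) \in \sfR'[Y].
\]

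\textbf{Step 2 (transitivity of norms).} I would use the tower formula for free extensions: for a free $\sfR'$-algebra $\A$ of finite rank, with $\sfR'$ itself free of finite rank over $\sfR$, and any $\sfR'$-linear endomorphism $\phi$ of $\A$,
\[
\det\nolimits_\sfR(\phi) = N_{\sfR'/\sfR}\bigl(\det\nolimits_{\sfR'}(\phi)\bigr).
\]
Apply this after base change to $\sfR[Y]$, with $\phi = Y-\mu_\Tau$. This gives $\chi_{\A/\sfR}(Y) = N_{\sfR'[Y]/\sfR[Y]}(P(z,Y))$.

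It remains to identify this norm with a resultant. The norm of $g(z)$ in $\sfR[Y][z]/(H)$ is the determinant of multiplication by $g$ modulo a monic $H$. The standard identity says this equals $\Res_z(H,g)$ for monic $H$, and it holds over any commutative ring. One proof checks it generically: take $H$ with indeterminate coefficients over $\ZZ$ and pass to a splitting ring, where both sides equal $\prod g(\text{roots})$. Then specialize. Applying the identity with $g = P(z,Y)$ finishes the proof.

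\textbf{Main obstacle.} The hard part is the tower formula for determinants over a general ring. In the $\sfR$-basis $(z^i\Tau^j)$, the matrix of $\phi$ is an $e\times e$ block matrix whose blocks are the $d\times d$ matrices of multiplication by elements of $\sfR'$. These blocks lie in a commutative subring of $\sfR^{d\times d}$, namely the image of $\sfR'$. So the determinant of the big matrix equals the determinant of the block-determinant, by the known result on block matrices with commuting blocks (e.g.\ Silvester's theorem), and that block-determinant is exactly $\det(\mu_{\det_{\sfR'}\phi})$.

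To avoid citing this, one could instead use the same reduction-to-generic trick. Make the coefficients of $H$ and $P$ indeterminates over $\ZZ$, so that both sides become polynomials with integer coefficients in those indeterminates. Over the fraction field both sides agree, by working in an algebraic closure and comparing eigenvalues: the eigenvalues of $\mu_\Tau$ are the roots $\tau$ of $P(\zeta,\cdot)$ over each root $\zeta$ of $H$. Equality then specializes to an arbitrary $\sfR$. Since the paper only needs the case of a field, the field case via eigenvalues (with multiplicities handled by the generic argument) suffices as a fallback.
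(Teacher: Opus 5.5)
Your proof is correct, but it takes a different route from the paper's. The paper passes immediately to generic coefficients (indeterminates over $\mathbb Q$). It notes that both $\chi_{\A/\sfR}$ and $\Res_z(H,P)$ are obtained from their generic versions by specialization, the latter because $H$ is monic. In the generic situation $H$ is separable. Over an algebraic closure, the Chinese Remainder Theorem then splits $\A$ into a product of univariate algebras defined by the polynomials $P(z_i,\Tau)$, one for each root $z_i$ of $H$. Multiplicativity of the resultant splits the right-hand side the same way, and for $H=z-z_i$ both sides are visibly $P(z_i,\Tau)$.

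You instead go through the tower $\sfR\subset\sfR'=\sfR[z]/(H)\subset\A$. The relative characteristic polynomial over $\sfR'$ is the image of $P$, by the companion matrix. Transitivity of determinants for $\sfR'$-linear maps, via the commuting-blocks theorem, turns this into $N_{\sfR'[Y]/\sfR[Y]}(P(z,Y))$. The identity ``determinant of multiplication by $g$ modulo a monic $H$ equals $\Res_z(H,g)$'' then finishes.

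Your argument is more conceptual: it exhibits the lemma as transitivity of norms $N_{\A/\sfR}=N_{\sfR'/\sfR}\circ N_{\A/\sfR'}$ applied to $Y-\Tau$. Its main step works over any commutative ring with no separability hypothesis. The cost is importing two classical results, the block-determinant theorem and the norm--resultant identity; the latter is usually proved by the same generic-specialization trick. The paper's argument is shorter and uses only CRT and elementary resultant properties. It needs $H$ separable, which is why it detours through generic coefficients in characteristic zero. Your fallback paragraph is essentially the paper's proof, with eigenvalue counting in place of the explicit CRT decomposition.
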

\begin{proof}
  Write $H=z^d + \sum_{i < d} h_i z^i$ and $P=\Tau^e + \sum_{i < d, j<e}
  p_{i,j} z^i\Tau^j$; let further $(\bar h_i)_{i <d}$ and $(\bar
  p_{i,j})_{i < d,j<e}$ be new indeterminates over $\mathbb Q$, which
  we use as coefficients to define polynomials $\bar H$ and $\bar P$.
  Both the characteristic polynomial $C$ of $\Tau$ modulo $(\bar H,\bar
  P)$ and the resultant $R=\Res_z(\bar H,\bar P)$ are polynomials in
  $\ZZ[t,\bar h_i,\bar p_{i,j}]$, and $\chi_{\A/\sfR}(\Tau)$ and $\Res_z(H,P)$ are
  obtained by evaluating $C$ and $R$ at the coefficients $h_i$ and
  $p_{i,j}$ in $\sfR$. This is by construction for the characteristic
  polynomial, and follows from $\bar H$ being monic in $z$ for the
  resultant, as in~\cite[Lemma~6.25]{GatGer13}.

  This means that it suffices to prove the claim for the polynomials
  $\bar H$ and $\bar P$. The former splits as a product of $d$
  distinct linear factors $\prod_{i \le d}(z-z_i)$ over an algebraic
  closure of its field of definition. By the Chinese Remainder theorem
  for the left-hand side, and multiplicativity of the resultant for
  the right-hand side, it suffices to consider the case where
  $H=(z-z_i)$, in which case both sides of the equality are seen to be
  $P(z_i,\Tau)$.
\end{proof}

This lemma reduces the computation of the resultant
$\Res_z(H(z),P(z,\Tau))$ to that of the characteristic polynomial
$\chi_{\A/\sfR}(\Tau)$. This will be done by means of Leverrier's algorithm, which
relies on the Newton identities
\[  j c_j=-\sum_{i=1}^j c_{j-i}s_i, \qquad 1\le j\le de, \]
where we write $s_k=\Tr_{\A/\sfR}(\Tau^k)$ for $ k \ge 0$ and
$\chi_{\A/\sfR}(\Tau) =\Tau^{de}+c_1\Tau^{de-1}+\cdots+c_{de}.$

Given the traces $s_0,\dots,s_{de}$, using the recurrence above to
compute the $c_j$'s results in a number of operations in $\sfR$ that
is quadratic in ${de}$; a more efficient approach based on Newton
iteration runs in quasi-linear time. However, both algorithms require
divisions by integers up to ${de}$ in $\sfR$, and these integers may
vanish if the characteristic of $\sfR$ is too small.

When we work over a finite field, though, this issue can be circumvented
by working with (truncated) $p$-adic integers rather than finite field
elements. Thus, let us from now on assume that $\sfR=\FF_q$, presented
as $\FF_q=\FF_p[y]/(\Lambda(y))$, for some irreducible polynomial $\Lambda \in
\FF_p[y]$ of degree $\kappa$. For $K \ge 1$, define further
$\sfR_K=\ZZ[y]/(p^K,\tilde \Lambda(y))$, where $\tilde \Lambda$ is an arbitrary
monic lift of $\Lambda$ to $\ZZ[y]$ of degree $\kappa$. 

The following fact is proved in~\cite{BosGon05} when $\kappa=1$ (so in
that case $\FF_q=\FF_p$), but the proof holds for an arbitrary
extension degree. It states that if we work with $p$-adic lifts of our
input polynomials, a logarithmic precision (in ${de}$) is sufficient to allow us
to compute the characteristic polynomial of $\Tau$ in $\A$.  Because
this involves calculations over $\sfR_K$ rather than $\sfR=\FF_q$, the
arithmetic complexity model is not appropriate to estimate the runtime
of this operation. Instead, we measure its cost in terms of bit
operations on a RAM.
\begin{fact}\label{fact:charpoly}
  Let $K=\lceil \log_p({de})\rceil$, and let $\tilde H$ and $\tilde P$ be
  arbitrary lifts of $H$ and $P$ to respectively $\sfR_K[z]$ and
  $\sfR_K[z,\Tau]$, with still $\tilde H$ monic of degree $d$ and $\tilde
  P$ monic of degree $e$ in $\Tau$ and reduced with respect to $\tilde H$.
  Put $\A_K=\sfR_K[z,\Tau]/(\tilde H(z),\tilde P(z,\Tau))$.

  Given the traces
  $\Tr_{\A_K/\sfR_K}(1),\Tr_{\A_K/\sfR_K}(\Tau),\ldots,\Tr_{\A_K/\sfR_K}(\Tau^{de})$,
  one can compute $\chi_{\A/\sfR}(\Tau)$ using $ \softO\bigl({de} \log q)$
  bit operations.
\end{fact}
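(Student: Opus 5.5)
The plan is to run Newton's identities over the lifted ring $\sfR_K$, where the prime $p$ is not a unit, and to control the $p$-power denominators with the $p$-adic argument of~\cite{BosGon05}. Set $N=de$. The first step is to observe that $\A_K$ is a free $\sfR_K$-module with the same monomial basis $\mathcal B$. Reduction modulo $p$ sends $\A_K$ onto $\A$ and sends multiplication matrices onto multiplication matrices. Hence the characteristic polynomial $\tilde\chi$ of $\Tau$ in $\A_K$ reduces to $\chi_{\A/\sfR}(\Tau)$ modulo $p$, and the traces $\tilde s_k=\Tr_{\A_K/\sfR_K}(\Tau^k)$ are exactly the power sums of $\tilde\chi$.

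The second step handles the division problem. Consider a $p$-adic ring $W=\ZZ_p[y]/(\tilde\Lambda)$; it is the unramified extension of $\ZZ_p$ of degree $\kappa$ and has residue field $\FF_q$. Lift $\tilde H$ and $\tilde P$ further to $W$; the characteristic polynomial $\hat\chi$ there reduces to $\tilde\chi$ modulo $p^K$. Over the fraction field of $W$ one has the generating-series identity
\[
\textstyle\mathrm{rev}(\hat\chi)(x)=\exp\Bigl(-\sum_{k\ge1}\hat s_k x^k/k\Bigr)\bmod x^{N+1}.
\]
The left-hand side has coefficients in $W$. So it suffices to compute, from the $\tilde s_k$ known modulo $p^K$, the truncated series $L=\sum_{k=1}^{N}\tilde s_k x^k/k$. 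Each term $x^k/k$ loses at most $v_p(k)\le\lfloor\log_p N\rfloor\le K$ digits of precision. The key point, following~\cite{BosGon05}, is that exponentiating $-L$ by Newton iteration and then reducing the result modulo $p$ gives the correct answer. Indeed, if $\delta$ is an error term with $v_p(\delta_k)\ge K-v_p(k)$, then $\exp$ applied to $\delta$ gives a series congruent to $1$ modulo $p$ at all orders below $N+1$. This follows from $v_p(k)<K$ for $k\le N$ once $K=\lceil\log_p N\rceil$ (the boundary case $N=p^K$ needs one extra check, or one more digit, which does not change the cost). Rather than divide by $k$, I would compute the derivative form $\tilde\chi'/\tilde\chi$ as a power series, i.e., the Newton identities $jc_j=-\sum c_{j-i}s_i$. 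I would track the computation in $\frac1{p^{K}}\sfR_{2K}$, meaning fixed-point arithmetic with precision $2K$ digits. The result is that every coefficient $c_j$ is known modulo $p$.

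The third step is the cost analysis. Elements of $\sfR_{2K}$ have bit size $O(\kappa K\log p)=O(\log q\cdot\log N)$. Power-series exponentiation, or solving the Newton identities by divide-and-conquer, uses $\softO(N)$ ring operations. Each ring operation costs $\softO(\kappa K\log p)$ bit operations, via Kronecker substitution and fast integer multiplication. This gives $\softO(N\log q)$ overall, and the final reduction modulo $p$ is linear. The main obstacle is the precision bookkeeping of the second step. The valuation argument of~\cite{BosGon05} is stated over $\ZZ_p$, so I would check that it uses only that the ring is unramified, i.e. that $p$ is a uniformizer and units reduce to units. Both properties hold for $W$ because $\Lambda$ is irreducible modulo $p$, so the argument carries over verbatim to $\kappa>1$.
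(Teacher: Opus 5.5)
Your proposal is correct and follows the paper's route. The paper gives no argument beyond citing~\cite{BosGon05} and asserting that it carries over to $\sfR_K=W/p^KW$ with $W$ unramified over $\ZZ_p$, and your sketch reconstructs exactly that argument. When $de<p^K$, perturbing the $s_k$ by elements of $p^KW$ multiplies $\exp(-\sum_k s_kx^k/k) \bmod x^{de+1}$ by a factor $\equiv 1 \bmod p$, since $v_p(m!)\le m-1$. The rounding residuals of a fixed-point solve of the Newton identities can be absorbed into such perturbations of the $s_j$, because $c_0=1$ and the computed $c_i$ stay integral.

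Your boundary caveat is in fact a correction to the statement, and no extra check can avoid it. Take $p=2$, $d=1$, $H=z$: the inputs $P=\Tau^2$ and $P=\Tau^2+1$ have the same traces $0,0,0$ in $\sfR_1=\FF_2$ but different characteristic polynomials. So when $de$ is a power of $p$ you genuinely need $K=\lfloor\log_p(de)\rfloor+1$, which does not affect the cost.
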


Fix $K=\lceil \log_p({de})\rceil$ as above.  Given an $\sfR_K$-linear
form $\tau:\A_K\to\sfR_K$ by means of its values on the monomial
basis $\mathcal B$ of $\A_K$, {\em power projection} asks for the sequence
$\tau(1),\tau(\Tau),\ldots,\tau(\Tau^{de}).$

The reconstruction algorithm takes as input these values for $\tau
=\Tr_{\A_K/\sfR_K}$. The trace values
$\Tr_{\A_K/\sfR_K}$ on $\mathcal B$ can be computed in a quasi-linear
$\softO({de})$ operations in $\sfR_K$,
using~\cite[Proposition~8]{PasSch06}.

Once the traces of all basis elements are known, we use the
two-variable triangular-set power projection algorithm of
\cite[Theorem~3.4]{PotSch13}, which builds on the modular composition
and power projection methods of \cite{KedUma11}. These algorithms also
require that we work with a boolean RAM. The algorithm
of~\cite{PotSch13} is written for polynomials with coefficients in
$\sfR=\FF_q$ rather than $\sfR_K$, but as pointed out
in~\cite{KedUma11, LebMeh13} the techniques carry over to polynomials
over~$\sfR_K$.

\begin{fact}[Triangular power projection]
  \label{fact:tripower}
  Let $K$, $\sfR_K$, $\tilde H$ and $\tilde P$ and $\A_K$ be as in
  Fact~\ref{fact:charpoly}.  The sequence
  \[
  \Tr_{\A_K/\sfR_K}(1),\Tr_{\A_K/\sfR_K}(\Tau),\ldots,\Tr_{\A_K/\sfR_K}(\Tau^{de})
  \]
  can be computed in $O_\epsilon(({de}\log q)^{1+\epsilon})$ bit
  operations for any $\epsilon > 0$.
\end{fact}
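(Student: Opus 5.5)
Fact~\ref{fact:tripower} is a transposed modular-composition computation over the ring $\sfR_K$, so I would prove it by running the algorithm of \cite[Theorem~3.4]{PotSch13} over $\sfR_K$ with the trace form as input, and then bounding the bit cost of the ring arithmetic. I would proceed in three steps.

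\emph{Step 1: the input linear form.} By \cite[Proposition~8]{PasSch06}, the values of $\tau=\Tr_{\A_K/\sfR_K}$ on the monomial basis $\mathcal B$ cost $\softO(de)$ operations in $\sfR_K$. That result only uses the monicity of $\tilde H$ and $\tilde P$, so it is valid over any commutative ring.

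An element of $\sfR_K$ has bit size $O(\kappa K\log p)$. With $K=\lceil\log_p(de)\rceil$ this is $O(\log q+\log(de)\log p)$, hence $\softO(\log q)$ up to polylogarithmic factors in $de$. Arithmetic in $\sfR_K$ costs $\softO(\kappa K\log p)$ bit operations via Kronecker substitution. So Step 1 costs $\softO(de\log q)$ bit operations.

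\emph{Step 2: power projection by duality.} The map $\tau\mapsto(\tau(\Tau^j))_{0\le j\le de}$ is the transpose of the composition map $\sfR_K[\Tau]_{\le de}\to\A_K$, $f\mapsto f(\Tau)$. This map is modular composition in the triangular algebra, and its transpose is the power projection problem.

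\cite{PotSch13} reduce composition modulo the bivariate triangular set $(\tilde H,\tilde P)$ to Kedlaya--Umans multivariate multipoint evaluation / modular composition \cite{KedUma11}. By the transposition principle, the transposed problem has the same cost up to a constant. Their cost bound is $O_\epsilon((de\log q)^{1+\epsilon})$ bit operations over $\FF_q$.

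\emph{Step 3: lifting the base ring to $\sfR_K$ (main obstacle).} This is the step needing care. Kedlaya--Umans evaluation reduces the coefficients to $\ZZ$, lifts to integers, evaluates with Chinese remaindering over small primes, and reduces back. Over $\sfR_K=\ZZ[y]/(p^K,\tilde\Lambda)$ the same approach works: lift elements to integer polynomials in $y$ of degree $<\kappa$ with entries $<p^K$. The integer bound then grows by a factor $p^{K}$ rather than $p$, i.e.\ $\log$ of the modulus becomes $K\log p$.

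As noted in \cite{KedUma11,LebMeh13}, the recursion's bit complexity depends only on the bit size of the ring elements. Repeating the bookkeeping with $\log q$ replaced by $\kappa K\log p=O(\log q\cdot\log_p(de))$ therefore gives $O_\epsilon((de\log q\log(de))^{1+\epsilon})$. The extra $\log(de)$ factor is absorbed into $(de\log q)^{\epsilon}$ after adjusting $\epsilon$.

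The remaining checks concern the steps of \cite{PotSch13} that go beyond pure evaluation. The reduction and change-of-basis operations only divide by the leading coefficients of monic polynomials. So no inversion in $\sfR_K$ other than units is needed, and every step is well defined over the local ring $\sfR_K$.

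Combining Steps 1–3 gives the stated bound $O_\epsilon((de\log q)^{1+\epsilon})$.
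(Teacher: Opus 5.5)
This is essentially the paper's argument: you take the trace values on the monomial basis from \cite[Proposition~8]{PasSch06}, run the bivariate power projection of \cite[Theorem~3.4]{PotSch13} transported from $\FF_q$ to $\sfR_K$ as in \cite{KedUma11,LebMeh13}, and absorb the $\log(de)$ overhead from the precision $K$ into $\epsilon$. Two of your side justifications are loose, though the cited references cover both: the Kedlaya--Umans algorithm in arbitrary characteristic is not algebraic, so the transposition principle does not directly apply in the bit model and the power-projection bound should be quoted as such rather than derived from composition; and over $\sfR_K$ the evaluation/interpolation steps need point sets with unit pairwise differences (possibly in an unramified extension of $\sfR_K$), not merely inverses of monic leading coefficients.
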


Combining Facts~\ref{fact:charpoly} and \ref{fact:tripower}, we obtain
the following proposition.

\begin{proposition}[characteristic polynomial / resultant]
  \label{prop:char}
  Let $H\in\FF_q[z]$ be monic of degree $d$ and $P\in \FF_q[z,\Tau]$ be
  monic of degree $e$ in $\Tau$, and reduced with respect to $H$, where
  $\FF_q$ is presented as $\FF_p[y]/(\Lambda(y))$, for some irreducible
  polynomial $\Lambda \in \FF_p[y]$.
  
  Then the resultant $\Res_z(H(z),P(z,\Tau))$ can be computed in
  $O_\epsilon((de\log q)^{1+\epsilon})$ bit operations for any
  $\epsilon > 0$.
\end{proposition}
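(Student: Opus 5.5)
The plan is to chain the folklore lemma with Facts~\ref{fact:tripower} and~\ref{fact:charpoly}. By the lemma, $\Res_z(H(z),P(z,\Tau))=\chi_{\A/\FF_q}(\Tau)$, where $\A=\FF_q[z,\Tau]/(H,P)$. So the task is to compute this characteristic polynomial, and we do so through $p$-adic lifting followed by Leverrier's method.

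\emph{Step 1 (lifting).} We fix $K=\lceil\log_p(de)\rceil$ and a monic lift $\tilde\Lambda$ of $\Lambda$. Lifting each coefficient of $H$ and $P$ from $\FF_p[y]/(\Lambda)$ to $\sfR_K=\ZZ[y]/(p^K,\tilde\Lambda)$ canonically (digits in $\{0,\dots,p-1\}$) gives $\tilde H$ and $\tilde P$. We keep the leading coefficients equal to $1$, so that $\tilde H$ is monic of degree $d$ and $\tilde P$ is monic of degree $e$ in $\Tau$. We also keep the $z$-degrees of $\tilde P$ below $d$, so $\tilde P$ stays reduced with respect to $\tilde H$. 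This is a coefficientwise copy. An element of $\sfR_K$ has bit size $O(\kappa K\log p)=O(\log q\cdot\log(de))$, so the lifting costs $\softO(de\log q)$ bit operations.

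\emph{Step 2 (traces on the basis).} Using \cite[Proposition~8]{PasSch06}, we compute $\Tr_{\A_K/\sfR_K}(z^i\Tau^j)$ for $0\le i<d$ and $0\le j<e$ with $\softO(de)$ operations in $\sfR_K$. Each operation in $\sfR_K$ is arithmetic in $(\ZZ/p^K\ZZ)[y]/(\tilde\Lambda)$ on objects of $\softO(\log q)$ bits. These operations are ring operations only, with no divisions by integers, so the step takes $\softO(de\log q)$ bit operations.

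\emph{Step 3 (power projection and reconstruction).} The trace form from Step~2 is the input to Fact~\ref{fact:tripower}, which returns $\Tr_{\A_K/\sfR_K}(\Tau^k)$ for $0\le k\le de$ in $O_\epsilon((de\log q)^{1+\epsilon})$ bit operations. Fact~\ref{fact:charpoly} then produces $\chi_{\A/\FF_q}(\Tau)$ from these traces in $\softO(de\log q)$ bit operations. Summing the costs gives $O_\epsilon((de\log q)^{1+\epsilon})$, since the $\softO(de\log q)$ terms are absorbed by taking any $\epsilon>0$.

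The main point needing care is Step~2. The cited trace algorithm is stated over a field or a general ring, and we must check two things. First, it uses only ring operations, or at most divisions by units such as leading coefficients equal to $1$. Second, its bit cost over $\sfR_K$ is really $\softO(de\log q)$, because the precision factor $K\le 1+\log_p(de)$ contributes only logarithmically. If that citation were unavailable over $\sfR_K$, the fallback is to compute traces via the tower $\sfR_K\subset\sfR_K[z]/(\tilde H)\subset\A_K$ using transitivity of traces. The traces of $\tilde H$ come from its reversed power series (Newton sums, division-free since $\tilde H$ is monic). The traces of $\tilde P$ over $\sfR_K[z]/(\tilde H)$ are handled in the same way.
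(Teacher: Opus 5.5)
Your proposal is correct and follows essentially the same route as the paper: identify the resultant with $\chi_{\A/\FF_q}(\Tau)$ via the folklore lemma, lift to $\sfR_K$ with $K=\lceil\log_p(de)\rceil$, and compute the traces on the monomial basis with \cite[Proposition~8]{PasSch06}. You then obtain $\Tr_{\A_K/\sfR_K}(\Tau^k)$ for $k\le de$ by Fact~\ref{fact:tripower} and recover the characteristic polynomial by Fact~\ref{fact:charpoly}, with your extra bookkeeping on bit sizes in $\sfR_K$ and on the division-free trace computation only making explicit what the paper leaves implicit in ``combining Facts~\ref{fact:charpoly} and~\ref{fact:tripower}''.
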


%%%%%%%%%%%%%%%%%%%%%%%%%%%%%%%%%%%%%%%%%%%%%%%%%%%%%%%%%%%%

\subsection{Cyclic determinant algorithm and complexity}
\label{ssec:cycdetalg}

We can now summarize the algorithm in the order of the preceding
sections, in order to compute $\Res_z(\Phi(z),A(z,\Tau))$, with
$n=\deg_z \Phi$, $\deg_t A \le m$ for some $m > 0$.

\begin{enumerate}[leftmargin=2em]
\item Write $n=p^c r$, with $\gcd(p,r)=1$, $\varphi=z^r-1$ and $B=A
  \bmod \varphi$. This takes $\softO(\log n)$ bit operations and
  $\softO(n m)$ operations in $\FF_q$.
\item Compute a monic form $((H_1,B_1,c_1),\dots,(H_s,B_s,c_s))$
  of $(\varphi,B)$, using $\softO(rm)$ operations in $\FF_q$
\item If any of the $B_i$'s vanishes, return $0$. 
\item Compute $R_1 = \prod_{i \le s} \Res_z(H_i,b_i)$, where for all
  $i$, $b_i$ is the leading coefficient of $B_i$. This takes
  $\softO(n)$ operations in $\FF_q$ using the half-GCD resultant
  algorithm~\cite[Chapter~11]{GatGer13}.
\item For all $i$, set $P_i = c_i B_i \bmod H_i$ and compute
  $\chi_i = \Res_z(H_i,P_i)$ using Proposition~\ref {prop:char}.
  The former takes $\softO(d_i e_i)$ operations in $\FF_q$,
  with $d_i = \deg_z H_i$ and $e_i=\deg_\Tau P_i$, the latter 
  $O_\epsilon((d_ie_i\log q)^{1+\epsilon})$ bit operations for any
  $\epsilon > 0$.

  Since $e_i \le m$ for all $i$ and $d_1 + \cdots + d_s =r$, the
  total is $\softO(rm)$ operations in $\FF_q$ and
  $O_\epsilon((r m\log q)^{1+\epsilon})$ bit operations.
\item Compute $R_2 = \prod_{i \le s} \chi_i$ in $\FF_q[\Tau]$; because
  $R_2$ has degree at most $r m$, this takes $\softO(r m)$
  operations in $\FF_q$, using subproduct tree
  techniques~\cite[Chapter~10]{GatGer13}.
\item Return $(R_1 R_2)^{p^c}$. Because the result has degree at most
  $n m$, this takes $\softO(n m)$ operations in $\FF_q$.
\end{enumerate}
Since any operation in $\FF_q$ can be executed on a RAM using
$\softO(\log q)$ bit operations, the full circulant determinant
computation costs $O_\epsilon((n m\log q)^{1+\epsilon})$ bit
operations, for any $\epsilon > 0$, proving
Theorem~\ref{thm:circulant-main}.

%%% Local Variables:
%%% mode: LaTeX
%%% TeX-master: "detnorm"
%%% End:

\section{Main algorithm: deterministic fast normal bases over any finite field}
\label{sec:small-fields}

The ingredients seen so far allow us to construct a normal element if
the cardinality $q$ of the base is large enough.  By
Proposition~\ref{prop:buildG}, construct $g_0,\ldots,g_{n-1}$ in
$\softO(n^2+n\log q)$ operations over $\Fq$, which is
$\softO(n^2\log q+n\log^2 q)$ bit operations.  Then, compute
$D(\Tau)=\det\Circ(g_0(\Tau),\ldots,g_{n-1}(\Tau))$ using the circulant
determinant routine, for $O_\epsilon((n^2 \log q)^{1+\epsilon})$ bit
operations. Choose any $n(n-1)+1$ distinct elements of $\Fq$ and
evaluate $D$ at them.  By Corollary~\ref{cor:badset}, at least one of
these values, say $t$, satisfies $D(t)\neq 0$.  Then $(\theta-t)^{-1}$
is normal over $\Fq$.  The multipoint evaluation uses $\softO(n^2)$
operations in $\Fq$, and the final inversion in $\E$ is a lower order
term. The total in this case is
\[
  O_\epsilon((n^2 \log q)^{1+\epsilon}) + \softO(n\log^2 q)
\] 
bit operations.

\subsection{A reduction for smaller finite fields}

It remains to show how to construct normal elements over small fields
in this same complexity.  We proceed by passing to an extension $\L$
which is large enough for our algorithm to work, compute a normal
basis there, and then descend it to a normal element in $\E/\Fq$.  The
descent is a fairly standard resolvent descent for normal bases, going
back to \cite{Per42} and especially \cite[Sec.~10.1]{Ere00}; see
\cite[Thm.~2.3.1]{Gao93}.  This gives a criterion that an element is
normal exactly when its resolvent is a unit in the corresponding group
algebra.

We start by constructing an extension of $\E$.  Choose a small prime
$\ell$ such that $q^\ell>n(n-1)$ and $\gcd(\ell,n)=1$. It is easy to
show that such an $\ell$ can be found quickly of size
$\ell=O(\log n\log\log n)$ by simply using trial division.  Let
$\L=\FF_{q^\ell}$.  Since $\gcd(\ell,n)=1$, the polynomial
$\Gamma\in\Fq[x]$ remains irreducible over $\L$.  We set
$\K=\L[x]/(\Gamma)\simeq \FF_{q^{n\ell}}$, and view
$\E=\Fq[x]/(\Gamma)$ as a subfield of $\K$.  The fields form the
following diagram:
\[
\begin{array}{ccccc}
&& \K=\L[x]/(\Gamma)\simeq\FF_{q^{n\ell}} && \\[-1mm]
& \swarrow && \searrow & \\[-1mm]
\L=\FF_{q^\ell} &&&& \E=\Fq[x]/(\Gamma) \\[-1mm]
& \searrow && \swarrow & \\[-1mm]
&& \Fq &&
\end{array}
\]

As before, let $\sigma$ be the $q$th-power Frobenius automorphism, and
set $\tau=\sigma^\ell$ and $\psi=\sigma^n$.  Then $\tau$ generates
$\Gal(\K/\L)$, while $\psi$ generates $\Gal(\K/\E)$.  The restriction
of $\tau$ to $\E$ also generates $\Gal(\E/\Fq)$.

Run our normal-element algorithm over the extension $\K/\L$, which
works since the base field has $|\L|=q^\ell>n(n-1)$ elements.  It
returns an element $\beta\in\K$ that is normal over $\L$.

We use the resolvent test described in \cite[Sec.~10.1]{Ere00}
for the (cyclic) extension $\K/\L$.  For $a\in\K$, define the resolvent
\[
  \rho_\tau(a)=\sum_{i=0}^{n-1}\tau^i(a)z^i \in \K[z]/(z^n-1).
\]
Then $a$ is normal over $\L$ if and only if
$\rho_\tau(a)$ is a unit in $\K[z]/(z^n-1)$.  

We extend $\tau$ to this quotient by applying it to coefficients and
fixing $z$. In particular, the action on a resolvent $\rho_\tau(a)$
follows a shift rule: since $\tau$ acts on the coefficients and fixes
$z$,
\[
  \tau(\rho_\tau(a))
  =
  \sum_{i=0}^{n-1}\tau^{i+1}(a)z^i
  =
  z^{-1}\sum_{i=0}^{n-1}\tau^i(a)z^i
  =
  z^{-1}\rho_\tau(a),
\]
where the middle equality uses $\tau^n(a)=a$ and $z^n=1$.  Let
\[
  u=\rho_\tau(\beta)\in\K[z]/(z^n-1).
\]
Since $\beta$ is normal over $\L$, the element $u$ is a unit.  Apply
$\psi$ coefficientwise to $u$ and multiply in the quotient ring:
\[
  v=\prod_{j=0}^{\ell-1}\psi^j(u)
  \in \K[z]/(z^n-1).
\]
Since $\psi$ has order $\ell$, applying $\psi$ cyclically permutes the
factors in this product.  Hence $v$ is fixed by $\psi$, so
$v\in\E[z]/(z^n-1)$.  Moreover, $v$ is a unit already in this smaller
ring: its inverse in $\K[z]/(z^n-1)$ is also fixed by $\psi$, hence
has coefficients in $\E$.  Write $v=\sum_{i=0}^{n-1}v_i z^i$, where
$v_i\in\E$, and output $\alpha=v_0$.

We now prove $\alpha$ is normal over $\Fq$.  Since
$\tau(u)=z^{-1}u$ and $\tau$ commutes with $\psi$, we have
\[
  \tau(v)=\prod_{j=0}^{\ell-1}\psi^j(\tau(u))
          =\prod_{j=0}^{\ell-1}\psi^j(z^{-1}u)
          =z^{-\ell}v.
\]
Comparing coefficients gives $\tau(v_i)=v_{i+\ell}$,
with indices modulo $n$.  Since $\alpha=v_0$, we have
$v_{r\ell}=\tau^r(\alpha)$ for all $r$.  As $\gcd(\ell,n)=1$, the
indices $r\ell$ run through all residues modulo $n$, and 
\[
  v=\sum_{r=0}^{n-1}\tau^r(\alpha)z^{r\ell}.
\]
Because $\gcd(\ell,n)=1$, the map $z\mapsto z^\ell$ permutes the
monomials $1,z,\ldots,z^{n-1}$ and defines an automorphism of
$\E[z]/(z^n-1)$.  Thus $v$ is the image of
\[
  \rho_\tau(\alpha)=\sum_{r=0}^{n-1}\tau^r(\alpha)z^r
\]
under this automorphism.  Since $v$ is a unit, so is
$\rho_\tau(\alpha)$.  Applying the resolvent test to $\E/\Fq$ with the
generator $\tau|_\E$, we conclude that $\alpha$ is normal over $\Fq$.

The reduction is deterministic.  Since $\ell=O(\log n\log\log n)$,
arithmetic in $\L$ costs only a polylogarithmic factor over arithmetic
in $\Fq$.  The descent computes the $\psi$-conjugates of $u$ and
performs $\ell-1$ products in $\K[z]/(z^n-1)$.  Using fast arithmetic
in $\K$ and fast cyclic multiplication, this costs an additional
$\softO(n^2+n\log q)$ operations over $\Fq$. 

\subsection{A complete algorithm for a normal element in any finite
  field}

The preceding construction completes the main theorem for all
base-field sizes.
\begin{theorem}
  \label{thm:main}
    Let $\Fq$ be presented as $\FF_p[y]/(\Lambda(y))$, for $\Lambda$
    irreducible in $\FF_p[y]$, let $\Gamma$ be irreducible of degree
    $n$ in $\Fq[\Tau]$ and set $\E=\Fq[\Tau]/(\Gamma(\Tau))$.  There
    exists a deterministic algorithm that outputs a normal element for
    $\E/\Fq$ using
    \[
      O_\epsilon((n^2 \log q)^{1+\epsilon}) + \softO(n\log^2 q)
    \] 
    bit operations, for any fixed $\epsilon > 0$.
\end{theorem}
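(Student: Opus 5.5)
The plan is to split into two cases according to whether $q>n(n-1)$, and in each case assemble results already established, checking the bit-cost bookkeeping carefully. Deciding which case applies costs $\softO(\log n+\log q)$ bit operations, since it only compares $q$ with $n(n-1)$.

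\emph{Large field case, $q>n(n-1)$.} I would follow the procedure described at the start of this section.
\begin{enumerate}
\item Compute $g_0,\ldots,g_{n-1}$ by Proposition~\ref{prop:buildG}. This takes $\softO(n^2+n\log q)$ operations in $\Fq$, that is, $\softO(n^2\log q+n\log^2 q)$ bit operations, since one operation in $\Fq=\FF_p[y]/(\Lambda)$ costs $\softO(\log q)$ bit operations.
\item Compute $D=\det\Circ(g_0,\ldots,g_{n-1})$ by Theorem~\ref{thm:circulant-main} with $m\le 2n-2$. This costs $O_\epsilon((n^2\log q)^{1+\epsilon})$ bit operations.
\item Evaluate $D$ at $n(n-1)+1$ distinct elements of $\Fq$. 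These can be enumerated deterministically as polynomials in $y$ of degree less than $\kappa$ with coefficients in $\{0,\ldots,p-1\}$, by counting. Multipoint evaluation costs $\softO(n^2)$ operations in $\Fq$.
\item By Lemma~\ref{lem:gram-circ} and Corollary~\ref{cor:badset}, some evaluation point $t$ has $D(t)\ne0$. Output $(\theta-t)^{-1}$, which needs one inversion in $\E$ at cost $\softO(n)$ operations.
\end{enumerate}
The $\softO(n^2\log q)$ terms are absorbed into $O_\epsilon((n^2\log q)^{1+\epsilon})$, giving the stated bound.

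\emph{Small field case, $q\le n(n-1)$.} Here $\log q=O(\log n)$, so $\softO(\log^2 q)$ factors are polylogarithmic in $n$.
\begin{enumerate}
\item Find the prime $\ell$ by trial division over primes not dividing $n$ until $q^\ell>n(n-1)$. Then $\ell=O(\log n\log\log n)$.
\item Construct $\L=\FF_{q^\ell}$ deterministically, presented as $\FF_p[y']/(\Lambda')$ as Theorem~\ref{thm:circulant-main} requires. This is the step I expect to be the main obstacle. Deterministically finding an irreducible polynomial of degree $\kappa\ell$ over $\FF_p$ is not known to be easy in general, but the target degree is only polylogarithmic in $n$ and $p\le q\le n^2$. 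A deterministic irreducible-polynomial construction, for instance Shoup's algorithm or brute-force search with irreducibility testing over the small field, therefore costs $\mathrm{poly}(\log n)\cdot p^{O(1)}$. I would need to verify this fits the budget, so I would try first to bound Shoup's method by $\softO(p^{1/2})\mathrm{poly}(\kappa\ell)\subseteq\softO(n)$. An alternative is to represent $\L$ as a tower over $\Fq$ via an irreducible polynomial of degree $\ell$ over $\Fq$, and then build a primitive-element presentation over $\FF_p$. Either route gives cost $n^{1+o(1)}$, which is negligible.
\item Note that $\Gamma$ stays irreducible over $\L$ because $\gcd(\ell,n)=1$. Run the large-field algorithm for $\K=\L[x]/(\Gamma)$ over $\L$. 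This costs $O_\epsilon((n^2\ell\log q)^{1+\epsilon})+\softO(n\ell^2\log^2q)$ bit operations, which is $O_\epsilon((n^2\log q)^{1+2\epsilon})$ after absorbing $\ell$ into $n^\epsilon$. Rename $\epsilon$.
\item Perform the resolvent descent of the previous subsection. Compute the $\psi$-conjugates $\psi^j(u)$ coefficientwise using the Frobenius $\sigma^n$ on $\K$, by iterated Frobenius. Then form $\ell-1$ products in $\K[z]/(z^n-1)$. This costs $\softO(n^2+n\log q)$ operations in $\Fq$, up to polylogarithmic factors in $\ell$.
\item Output $\alpha=v_0$. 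Its coefficients lie in $\E$, so they are read off as an element of $\E$. Its normality was proved in the preceding subsection.
\end{enumerate}

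Summing both cases gives $O_\epsilon((n^2\log q)^{1+\epsilon})+\softO(n\log^2q)$ bit operations, as claimed in Theorem~\ref{thm:main}.
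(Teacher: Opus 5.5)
Your proposal is correct and is essentially the paper's argument. The large-field case uses Proposition~\ref{prop:buildG}, Theorem~\ref{thm:circulant-main} and evaluation of $D$ at $n(n-1)+1$ points, and the small-field case uses the prime $\ell$, the extension $\L=\FF_{q^\ell}$ and the resolvent descent $v=\prod_j\psi^j(u)$, $\alpha=v_0$.

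Your discussion of constructing $\L$ deterministically covers a step the paper leaves out. The Shoup route you settle on fits the budget since $p\le q\le n(n-1)$. Brute-force search does not obviously fit, and your claimed bound $\mathrm{poly}(\log n)\cdot p^{O(1)}$ does not hold for it. With a direct presentation over $\FF_p$ you would also need the embedding $\Fq\hookrightarrow\L$, i.e.\ a deterministic root of $\Lambda$ in $\L$. Deterministic root finding gives this at cost $\softO(p)$ times polylogarithmic factors, which is within the bound.
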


%%% Local Variables:
%%% mode: LaTeX
%%% TeX-master: "detnorm"
%%% End:

\clearpage

\bibliography{detnorm}

\end{document}